\definecolor{darkturquoise}{rgb}{0.0, 0.81, 0.82}
\definecolor{cerisepink}{rgb}{0.93, 0.23, 0.51}
\definecolor{brilliantlavender}{rgb}{0.96, 0.73, 1.0}
\definecolor{fuchsiapink}{rgb}{1.0, 0.47, 1.0}
\newtheorem{theorem}{Theorem}[section]
\newtheorem{prop}[theorem]{Proposition}
\theoremstyle{definition}
\newtheorem{definition}{Definition}[section]
\theoremstyle{remark}
\newtheorem*{remark}{Remark}
\newcommand{\nord}[1]{{:}#1{:}}
\def\luv{\ell_{\rm UV}}
\DeclareMathOperator{\arcsinh}{arcsinh}
\newcommand{\be}{\begin{equation}}
\newcommand{\ee}{\end{equation}}
\newcommand{\bea}{\begin{eqnarray}}
\newcommand{\eea}{\end{eqnarray}}
\def\bml{\begin{subequations}}
\def\blea{\bml\begin{eqnarray}}
\def\eml{\end{subequations}}
\def\elea{\end{eqnarray}\eml}
\DeclareMathOperator{\csch}{csch}
\date{\today}
\begin{document}

\title{A generalization of the Hawking black hole area theorem}

\author{Eleni-Alexandra Kontou}
\email{eleni.kontou@kcl.ac.uk}
\affiliation{ITFA and GRAPPA, Universiteit van Amsterdam, Science Park 904, Amsterdam, the Netherlands}
\affiliation{Department of Mathematics, King’s College London, Strand, London WC2R 2LS, United Kingdom}

\author{Veronica Sacchi}
\email{veronica.sacchi@epfl.ch}
\affiliation{Scuola Normale Superiore, Piazza dei Cavalieri 7, 56126 Pisa, Italy}
\affiliation{University of Pisa, Lungarno Antonio Pacinotti 43, 56126 Pisa, Italy}
\affiliation{École Polytechnique Fédéral de Lausanne (EPFL), Route de la Sorge, CH-1015 Lausanne, Switzerland}

\begin{abstract}
Hawking's black hole area theorem was proven using the null energy condition (NEC), a pointwise condition violated by quantum fields. The violation of the NEC is usually cited as the reason that black hole evaporation is allowed in the context of semiclassical gravity. Here we provide two generalizations of the classical black hole area theorem: First, a proof of the original theorem with an averaged condition, the weakest possible energy condition to prove the theorem using focusing of null geodesics. Second, a proof of an area-type result that allows for the shrinking of the black hole horizon but provides a bound on it. This bound can be translated to a bound on the black hole evaporation rate using a condition inspired from quantum energy inequalities. Finally, we show how our bound can be applied to two cases that violate classical energy conditions.

\end{abstract}

\maketitle

\tableofcontents

\newpage

\section{Introduction}
\label{sec:introduction}

Hawking proved his famous black hole area theorem in 1971 \cite{Hawking:1971vc}, showing that, under certain spacetime assumptions, the area of the black hole horizon can never decrease. This important result in classical general relativity paved the way to the laws of black hole thermodynamics: the area of the black hole horizon is a measure of its entropy and according to the second law it cannot decrease. It was again Hawking that a few years later proved that black holes emit radiation \cite{Hawking:1975vcx}, a result in tension with his previous theorem: as black holes emit radiation their mass and thus the area of their horizon decreases.   

To interpret this violation of the area theorem we need to look closely at its assumptions and in particular at the null energy condition (NEC). Energy conditions are bounds on components of the stress-energy tensor and were introduced in general relativity as reasonable assumptions for any matter model \cite{kontou2020energy,curiel2017primer}. They have been used as assumptions in a variety of classical relativity theorems including the singularity theorems \cite{penrose1965gravitational,hawking1966occurrence}. The NEC states that
\be
T_{\mu \nu}U^\mu U^\nu \geq 0 \,,
\ee
were $U^\mu$ is a null vector field. This condition is obeyed by most reasonable classical theories. However, this and all pointwise conditions are violated by quantum fields as Epstein, Glaser and Jaffe \cite{epstein1965nonpositivity} showed. 

Hawking radiation requires a semiclassical setting, quantum fields on a classical curved background. Thus the usual interpretation for the failure of the area theorem in a semiclassical setup is the violation of the NEC. But the negative energy that expectation values of the stress-energy tensor can admit is not unbounded in the context of a quantum field theory (QFT), at least when their averages are considered. Ford \cite{Ford:1978qya} first introduced the concept of a quantum energy inequality (QEI), interestingly as a way to prevent the violation of the second law of black hole thermodynamics. In general, QEIs bound the magnitude and duration of any negative energy densities or fluxes within a QFT. The most famous example is the bound on the renormalized energy density averaged over a segment of a timelike geodesic in Minkowski spacetime \cite{Ford:1994bj,Fewster:1998pu}
\be
\int_\gamma f^2(t) \langle \nord{\rho} \rangle dt \geq -\frac{1}{16\pi^2} \|f''(t)\|^2 \,,
\ee
where $\nord{\rho}$ is the normal ordered energy density. For $f$ a smooth compactly supported real-valued function, the lower bound is finite. QEIs have been derived for free fields and a few interacting ones for flat and curved spacetimes (see \cite{Fewster2017QEIs,kontou2020energy} for reviews). 

In recent years, QEIs have been used to prove semiclassical singularity theorems. The idea is to first replace the pointwise energy conditions of the classical theorems with a condition of the form of QEIs \cite{Fewster:2010gm,fewster2020new}. Then show that there exist quantum fields that obey such a condition \cite{Fewster:2021mmz,Freivogel:2020hiz}. We should note that the singularity theorems (as the area theorem) use a geometric condition, the null (or timelike) convergence condition
\be
R_{\mu \nu}U^\mu U^\nu \geq 0 \,,
\ee
for $U^\mu$ null (or timelike). Classically, one can use the Einstein equation to go from the NEC to the null convergence condition and back. Semiclassically, what is used to connect the expectation values of the renormalized stress-energy tensor with classical curvature is the semiclassical Einstein equation (SEE)
\be
\label{eqn:see}
G_{\mu \nu}=8\pi \langle T^{\text{ren}}_{\mu \nu} \rangle_\omega \,.
\ee
A complete solution to that equation should include a set of a metric $g_{\mu \nu}$ and a quantum state $\omega$. But solutions are notoriously difficult to find and their existence has only been shown in highly symmetric cases (e.g. \cite{Gottschalk:2018kqt,Meda:2020smb,Sanders:2020osl}). Its use in the generalization of classical relativity theorems is much simpler: one uses the SEE directly to connect the QEIs with a geometric condition.

The original singularity theorems as well as the area theorem used the Raychaudhuri equation to show convergence of the congruence of geodesics in their proofs. Some generalizations of singularity theorems with weakened energy conditions also used the Raychaudhuri equation and properties of Riccati inequalities \cite{tipler1978energy,Borde:1987qr,Roman:1988vv,Fewster:2010gm}. More recently Fewster and Kontou \cite{fewster2020new} suggested the use of index form methods to prove singularity theorems with conditions inspired by QEIs.  This method is not new, it was used by O’Neill to prove the orginal singularity theorems \cite{o1983semi} and also by Chicone and Ehrlich \cite{chicone1980line} to prove the existence of conjugate points along complete geodesics using averaged energy conditions.

Fewster and Kontou used the method Ref.~\cite{fewster2020new} to prove the first semiclassical singularity theorem \cite{Fewster:2021mmz} for timelike geodesic incompleteness using the quantum strong energy inequality \cite{Fewster:2018pey} of a minimally coupled quantum scalar field. While Ref.~\cite{fewster2020new} includes the proofs of singularity theorems for both timelike and null geodesic incompleteneness with conditions inspired by QEIs, the null case presents significant difficulties. With a direct counterexample, Fewster and Roman \cite{Fewster:2002ne} showed that the renormalized expectation value of the null energy, averaged over a finite segment of a null geodesic is unbounded from below. To overcome that problem, Freivogel and Krommydas \cite{Freivogel:2018gxj} introduced a UV cutoff to control the lower bound in these cases. The QEI, called the smeared null energy condition (SNEC) has the form
\be
		\label{eqn:SNEC}
		 \int_\gamma f^2(\lambda)\langle \nord{T_{\mu\nu}U^{\mu}U^{\nu}} \rangle d\lambda \ge -\frac{4B}{G_N}\vert\vert f' \vert\vert^2 \,,
\ee
where $G_N \lesssim \ell_{\text{UV}}^{n-2}/N$ is the effective Newton's constant, $\ell_{\text{UV}}$, the UV cutoff of the theory and $N$ the number of quantum fields. The SNEC has been proven for free fields in Minkowski spacetime \cite{Fliss:2021gdz} and it was used to prove a semiclassical singularity theorem for null geodesic incompleteness \cite{Freivogel:2020hiz}.

In this work we apply the method of Ref.~\cite{fewster2020new} to the case of the area theorem. Unlike the singularity theorems, which seem to hold semiclassically, we do not provide a proof of the area theorem using QEI-inspired conditions. A proof like that would be in contrast to the concept of black hole evaporation. Instead the goal of this paper is two-fold: 
\begin{itemize}
\item
We find the weakest condition under which the classical area theorem proof holds 
\item
We derive a bound on the evaporation rate of semiclassical black holes using QEI-inspired conditions
\end{itemize}
The second result is effectively a semiclassical generalization of the area theorem. The interpretation is straightforward: QFT allows for negative energy and so the decrease of the black hole horizon area. However, the bounds on the amount of negative energy provide a bound to the rate of its decrease.

The paper is organized in the following manner. In Sec.~\ref{sec:index} we review the index form method for null geodesics. In Sec.~\ref{sec:classical} we prove the original Hawking black hole area theorem using the index form method and find the weakest energy condition under which it holds. In Sec.~\ref{sec:generalized} we prove the main result of our paper, a bound on the rate of evaporation of semiclassical black holes. In Sec.~\ref{sec:applications} we apply the theorem in two examples that violate the NEC, the classical non-minimally coupled scalar field and the quantum minimally coupled scalar using SNEC. In both cases we compare with the calculated evaporation rate. We conclude in Sec.~\ref{sec:discussion} and provide two additional results on the location of the trapped surface in semiclassical black holes in Appendix~\ref{app:trapped} and the allowed duration of negative energy in Appendix~\ref{app:l0upperbound}.

\vspace{0.5in} 

\textit{Conventions:} We use the $(-,+,+)$ sign convention from Misner, Thorne and Wheeler \cite{misner1973gravitation}. The units are $c=G=1$ except in part of Sec.~\ref{sec:applications} and we work in $n$ spacetime dimensions unless otherwise stated. 

\section{The index form method}
\label{sec:index}

In this section we briefly review the notion of the index form for null geodesics and how it is used to prove the formation of focal points. We loosely follow Ref.~\cite{o1983semi}.

\subsection{Variation of the action integral}
\label{sub:action}

To begin, as we want to study null geodesics that have zero proper time, we study the \textit{action} or \textit{energy} functional. Given a curve \(\gamma : [0, \ell] \rightarrow M\) affinely parameterized by \(\lambda\), \(E\) is defined as
\be
E[\gamma] \coloneqq \frac{1}{2}\int_{0}^{l} g(\gamma'(\lambda), \gamma'(\lambda))d\lambda \,.
\ee
Let \(P\) be a spacelike submanifold of \(M\) of co-dimension \(2\) and consider the set of all piecewise smooth curves joining \(P\) to \(q\), \(\Omega(P, q)\). Then the family of curves \(\gamma_s(\lambda) \coloneqq
\zeta(\lambda, s)\) in \(\Omega(P, q)\), varies smoothly in $s$. The tangent and the transverse vector fields are defined as $U_\mu=\gamma'(\lambda)$ and $V_\mu=\partial \gamma_s/\partial s |_{s=0}$. The second variation of $E[\gamma_s]$ or \textit{Hessian} $\mathcal{H}[V]$ is found to be \cite{o1983semi, fewster2020new}
\be
	\label{eq:hessian}
	\mathcal{H}[V]\equiv \frac{\partial^2E[\gamma_s]}{\partial s^2}\Big\vert_{s = 0} = 
	\int_{0}^{\ell} \left[(\nabla_UV_{\mu})(\nabla_UV^{\mu}) + R_{\mu\nu\alpha\beta}U^{\mu}V^{\nu}V^{\alpha}U^{\beta}\right] d\lambda -U^{\mu} \nabla_V\nabla_UV_{\mu} \Big\vert_0^l\,.
\ee
The second variation of the length functional, used for timelike geodesics is called the \textit{index form}, from which the method takes its name.

Let now \(e_i\) with \(i = 1, \ldots, n - 2\) be an orthonormal basis of \(T_{\gamma(0)}P\), and parallel transport it along \(\gamma\) to generate \(\{E_i\}_{i = 1, \ldots, n-2}\). Then, take \(f\) a smooth function with \(f(0) = 1\) and \(f(l) = 0\). Calculating the Hessian for $fE_i$ gives
\begin{equation}
	\mathcal{H}(fE_i, fE_i) = \int_{0}^{\ell} \left(-f'(\lambda)^2 + f(\lambda)^2R_{\mu\nu\alpha\beta}U^{\mu}E_i^{\nu}E_i^{\alpha}U^{\beta}\right) d\lambda- U_{\mu}\mathrm{I\!I}^{\mu}(E_i, E_i)\Big\vert_{\gamma(0)}\,,
\end{equation}
where $\mathrm{I\!I}$ is the shape tensor or second fundamental form. We then sum over all \(i = 1, \ldots, n - 2\) to get:
\begin{equation}
	\label{eq:hessian-averagded}
	\sum_{i=1}^{n - 2}\mathcal{H}(fE_i, fE_i) = - \int_{0}^{\ell} \left((n - 2)f'(\lambda)^2 - f(\lambda)^2R_{\mu\nu}U^{\mu}U^{\nu}\right) d\lambda - (n - 2)U_{\mu}\mathrm{H}^{\mu}\Big\vert_{\gamma(0)}.
\end{equation}
Here 
\be
H^\mu=\frac{1}{n-2}\sum_{i=1}^{n-2} \mathrm{I\!I}^{\mu}(E_i, E_i) \,,
\ee
is the mean normal curvature vector field of $P$.

\subsection{Formation of focal points}

A focal point on a causal geodesic is defined formally as follows
\begin{definition}
	Let \(\gamma\) be a causal geodesic of \(M\) normal to \(P \subset M\). Then \(\gamma(r)\), where \(r \neq 0\) is a \emph{focal point} of \(P\) along \(\gamma\) provided there is a nonzero \(P\)-Jacobi field \(V\) on \(\gamma\), with \(V(r) = 0\).
\end{definition}
More informally, a focal point of a submanifold \(P\) along a normal geodesic \(\gamma\) is an \emph{almost}-meeting point of nearby \(P\)-normal geodesics of the same causal character of \(\gamma\).

For timelike geodesics, after a focal point, the curves cease to extremize length. For null geodesics, to have an analogous statement we need the notion of \textit{promptness}. The term was introduced by Witten\cite{witten2020light} who defined a causal path from $q$ to $p$ as ``prompt'' if there is no causal path from $q$ to a point $r$ near $p$ and just to its past. 

\noindent More informally, there is no causal path, starting from \(q\), that arrives sooner to $p$. The notion is similar when we have null geodesics emanating normally from a spacelike hypersurface. As in the case of timelike geodesics, a null geodesic is not prompt after a focal point. 

To determine the existence or not of a focal point we use the Hessian calculated in subsection~\ref{sub:action}. In particular
\begin{prop}
	\label{prop:H-positivity-criteria}
	If there are no focal points of \(P \subset M\) along a normal null geodesic \(\gamma\in\Omega(P,q)\), then \(\mathcal{H}[V]\) is negative semidefinite on \(T_{\gamma}^{\perp}\Omega = \{V \in T_{\gamma}\Omega \wedge V \perp \gamma\}\). Furthermore if \(V \in T_{\gamma}^{\perp}\Omega \) and \(\mathcal{H}[V] = 0\) then \(V\) is tangent to \(\gamma\)\,.
\end{prop}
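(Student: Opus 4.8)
The plan is to adapt to the null, submanifold-to-point setting the classical Morse-index argument for conjugate points (as in O'Neill~\cite{o1983semi}); the only genuinely new feature is the degeneracy of $g$ along the null geodesic, which forces one to work in the screen bundle $Q := U^\perp/\langle U\rangle$ along $\gamma$, where $U=\gamma'$. Since $U$ is null we have $U\in U^\perp$, and $Q$ inherits a positive-definite metric from $g$. One first checks from \eqref{eq:hessian} that, for $V\perp\gamma$, both the integrand and the boundary term are unchanged under $V\mapsto V+\psi U$: the curvature term by the antisymmetry of $R_{\mu\nu\alpha\beta}$ in its first index pair, the kinetic and boundary terms because $g(V,U)\equiv 0$ forces $g(\nabla_U V,U)\equiv 0$. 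Hence $\mathcal{H}$ descends to a quadratic form on sections of $Q$ that vanishes identically in the $U$-direction, and proving the Proposition amounts to showing this form is negative definite on $Q$.

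Next I would construct a canonical Jacobi basis. Let $\{E_i\}_{i=1,\dots,n-2}$ be the parallel orthonormal frame of the excerpt, and let $J_i$ be the $P$-Jacobi field with $J_i(0)=E_i(0)$ and $(\nabla_U J_i)(0)$ fixed by the $P$-Jacobi (shape-operator) initial condition; one verifies $g(J_i,U)\equiv 0$, so each $J_i$ descends to a section of $Q$. Because $P$ has codimension two and $\gamma$ is $P$-normal, the map $T_{\gamma(0)}P\to Q_{\gamma(0)}$ is an isomorphism, so $\{[J_i(0)]\}$ is a basis of $Q_{\gamma(0)}$; the hypothesis that $\gamma$ carries no focal point of $P$ says exactly that $\{[J_i(\lambda)]\}$ remains a basis of $Q_{\gamma(\lambda)}$ for every $\lambda\in[0,\ell]$. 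Consequently any $V\in T_\gamma^\perp\Omega$ can be written $[V]=\sum_i\phi^i[J_i]$ with piecewise-smooth coefficients $\phi^i$ on $[0,\ell]$, and $V\in\Omega(P,q)$ forces $\phi^i(\ell)=0$.

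The core step is the integration-by-parts identity. Substituting $V=\sum_i\phi^i J_i$ into \eqref{eq:hessian}, and using the Jacobi equation for the $J_i$ together with the Lagrange-type symmetry $g(\nabla_U J_i,J_j)=g(J_i,\nabla_U J_j)$ --- whose $\lambda$-independence comes from the curvature symmetries and whose vanishing uses the symmetry of the second fundamental form encoded in the $P$-Jacobi initial data --- the curvature and cross terms collapse into a total $\lambda$-derivative, leaving
\begin{equation}
\mathcal{H}[V] \;=\; -\int_0^\ell \sum_{i,j} g(J_i,J_j)\,\frac{d\phi^i}{d\lambda}\,\frac{d\phi^j}{d\lambda}\,d\lambda \;+\; \Big[\,\text{boundary terms}\,\Big]_0^\ell .
\end{equation}
The boundary contribution at $\lambda=\ell$ vanishes since $\phi^i(\ell)=0$, and the contribution at $\lambda=0$ is cancelled by the $U^\mu\nabla_V\nabla_U V_\mu$ term already present in \eqref{eq:hessian} --- this cancellation is precisely why the $P$-Jacobi condition, rather than a Dirichlet condition at $P$, is the right initial condition. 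Since $\{[J_i(\lambda)]\}$ is a basis of the positive-definite fibre $Q_{\gamma(\lambda)}$, the Gram matrix $g(J_i,J_j)$ is positive definite, so $\mathcal{H}[V]\le 0$. If moreover $\mathcal{H}[V]=0$, then $d\phi^i/d\lambda\equiv 0$ on each smooth piece, and continuity together with $\phi^i(\ell)=0$ forces $\phi^i\equiv 0$; hence $[V]=0$, so $V$ is proportional to $U$ and therefore tangent to $\gamma$, which is the rigidity statement.

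I expect the main obstacle to be the boundary-term bookkeeping at $P$: one must verify that the shape-tensor contribution coming from the $U^\mu\nabla_V\nabla_U V_\mu$ term in \eqref{eq:hessian} cancels exactly the $g(\nabla_U J_i,J_j)\phi^i\phi^j$ boundary term produced by the integration by parts. A subsidiary technical point is the consistency of the screen-bundle reduction, i.e.\ checking that $\mathcal{H}$ and the Jacobi equation are genuinely well defined on $Q$, so that the absence of focal points can be rephrased as the invertibility, on all of $(0,\ell]$, of the matrix built from the $[J_i]$.
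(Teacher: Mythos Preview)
The paper does not give its own proof of this proposition: it simply refers the reader to O'Neill~\cite{o1983semi}, Proposition~10.41. Your sketch is precisely that argument --- the Morse--index computation via a $P$--Jacobi basis, carried out in the quotient bundle $Q=U^\perp/\langle U\rangle$ to handle the null degeneracy --- and it is correct. The points you flag as potential obstacles (the boundary cancellation at $P$ coming from the self-adjointness of the shape operator, and the well-definedness of the Jacobi equation on $Q$) are exactly the places where O'Neill's text does the bookkeeping, so there is nothing to add beyond what you already identified.
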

The proof of Proposition~\ref{prop:H-positivity-criteria} can be found in Ref.~\cite{o1983semi} (Proposition 10.41). Then using Eq.~\eqref{eq:hessian-averagded} the proof of the following proposition is immediate.
\begin{prop}
		\label{prop:fp-criteria}
		Let \(P\) be a spacelike submanifold of \(M\) of co-dimension \(2\) and let \(\gamma\) be a null geodesic joining \(p \in P\) to \(q\in J^+(P)\). Let $\gamma$ be affinely parametrized by $\lambda \in [0,\ell]$.  If there exist a smooth \((-\frac{1}{2})\)-density \(f\) which is non vanishing at \(p\) but is null at \(q\), and such that
		\begin{equation}
		\label{eq:fp-criteria}
		\int_0^\ell \big((n -2)f'(\lambda)^2 - f(\lambda)^2 R_{\mu \nu} U^\mu U^\nu \big)d\lambda \le -(n -2) U_\mu H^\mu \big|_{\gamma(0)}
		\end{equation}
		then there is a focal point to \(P\) along \(\gamma\). If the inequality holds strictly then the focal point is located before \(q\).
	\end{prop}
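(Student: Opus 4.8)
The plan is to argue by contradiction, in the same way as the classical focal-point criterion, using the averaged Hessian identity \eqref{eq:hessian-averagded} and Proposition~\ref{prop:H-positivity-criteria}. Suppose, for contradiction, that $\gamma$ has \emph{no} focal point of $P$ along it. Then Proposition~\ref{prop:H-positivity-criteria} applies, so $\mathcal{H}[V]$ is negative semidefinite on $T^\perp_\gamma\Omega$, and vanishes there only on vector fields tangent to $\gamma$.

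Next I would feed into this a specific perpendicular variation built out of $f$. Regarding the \((-\frac{1}{2})\)-density $f$ as a function on $[0,\ell]$, normalized to $f(0)=1$ and $f(\ell)=0$, I would take an orthonormal basis $\{e_i\}_{i=1}^{n-2}$ of $T_{\gamma(0)}P$, parallel transport it along $\gamma$ to $\{E_i\}$, and consider the $n-2$ fields $fE_i$. Since $\gamma$ is normal to $P$ and parallel transport preserves the metric, each $E_i$ is everywhere orthogonal to $U=\gamma'$, so $fE_i\in T^\perp_\gamma\Omega$; since $f(0)\neq 0$ the field is not identically zero, and since $g(E_i,E_i)=1$ while $g(U,U)=0$ it is nowhere tangent to $\gamma$. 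Proposition~\ref{prop:H-positivity-criteria} therefore forces $\mathcal{H}(fE_i,fE_i)<0$ for each $i$, and hence $\sum_i\mathcal{H}(fE_i,fE_i)<0$. On the other hand, \eqref{eq:hessian-averagded} evaluates that same sum to $-\int_0^\ell\big((n-2)f'(\lambda)^2-f(\lambda)^2R_{\mu\nu}U^\mu U^\nu\big)\,d\lambda-(n-2)U_\mu H^\mu\big|_{\gamma(0)}$, and the hypothesis \eqref{eq:fp-criteria} states precisely that this quantity is $\geq 0$. This contradiction would establish that there is a focal point of $P$ along $\gamma$.

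For the refinement, strict inequality in \eqref{eq:fp-criteria} upgrades the last step to $\sum_i\mathcal{H}(fE_i,fE_i)>0$. If the only focal point along $\gamma$ were the endpoint $q=\gamma(\ell)$, then there would be no focal point on $(0,\ell)$ and $\mathcal{H}$ would still be negative semidefinite (now with kernel the associated $P$-Jacobi field), so the sum would be $\leq 0$ --- contradiction again; hence in the strict case the focal point lies strictly before $q$. (One can also obtain this by applying the non-strict statement to $\gamma|_{[0,\ell-\epsilon]}$ with $f$ rescaled to vanish at $\ell-\epsilon$: by continuity of the integral the hypothesis survives for small $\epsilon$.) I expect the only delicate point to be this endpoint argument --- essentially, whether ``no focal point along $\gamma$'' in Proposition~\ref{prop:H-positivity-criteria} is read on the open or on the half-open interval. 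The remaining steps are bookkeeping: verifying that the $fE_i$ are admissible perpendicular variations with the right boundary values, and reading the sign of the averaged Hessian off \eqref{eq:hessian-averagded}; this is why the argument can reasonably be called ``immediate''.
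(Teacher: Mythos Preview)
Your proposal is correct and follows exactly the approach implicit in the paper, which declares the result ``immediate'' from Eq.~\eqref{eq:hessian-averagded} and Proposition~\ref{prop:H-positivity-criteria}: you simply unpack that immediacy by testing the Hessian on the perpendicular fields $fE_i$ and reading off the contradiction. Your handling of the strict case (either via the kernel of $\mathcal{H}$ or via restriction to $[0,\ell-\epsilon]$) is a reasonable elaboration of a point the paper leaves tacit.
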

For a discussion on the invariance of Eq.~\eqref{eq:fp-criteria} see Ref.~\cite{fewster2020new}. Before we proceed we should address the issue of parametrization of null geodesics. Following Ref.~\cite{fewster2020new} we fix the affine parameter $\lambda$ by requiring 
\be
\hat{H}_\mu \frac{d\gamma^\mu}{d\lambda}=1 \,,
\ee
where $\hat{H}^\mu$ is a unit timelike vector defined as $H^\mu=H \hat{H^\mu}$ where $H^\mu$ is the mean normal curvature of $P$. Now we can state the main result of this section, the classical focusing theorem (see Ref.~\cite{o1983semi} Prop.10.43).

	\begin{theorem}[Classical Focusing theorem]
		\label{the:focusingcla}
		Let \(P\) be a spacelike submanifold of codimension \(2\). Choose $\gamma$ a null geodesic parametrized by $\lambda \in [0,\ell]$ such that $\hat{H}_\mu U^\mu=1$, where $U^\mu$ is the tangent vector to $\gamma$. Suppose that
		\begin{itemize}
		\item [(i)]
		$P$ is future converging, meaning that $H<0$;
		\item[(ii)]
	    the null-convergence condition
	    \be
	    R_{\mu \nu} U^\mu U^\nu \geq 0 \,,
	    \ee
	    holds everywhere on \(\gamma\).
	    \end{itemize}
	    Then if $\ell \geq 1/|H|$ there is a focal point on $\gamma$.
	\end{theorem}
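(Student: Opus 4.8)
The plan is to obtain the focal point directly from Proposition~\ref{prop:fp-criteria} by exhibiting an explicit test density $f$. First I would rewrite the right-hand side of \eqref{eq:fp-criteria} in terms of $H$: with the affine parametrization fixed by $\hat{H}_\mu U^\mu = 1$ and the decomposition $H^\mu = H\hat{H}^\mu$, one has $U_\mu H^\mu = H\,\hat{H}_\mu U^\mu = H$, and since $P$ is future converging, $H = -|H|$. Hence the inequality \eqref{eq:fp-criteria} that we must verify reads
\begin{equation}
\int_0^\ell \big( (n-2) f'(\lambda)^2 - f(\lambda)^2 R_{\mu\nu} U^\mu U^\nu \big)\, d\lambda \le (n-2)|H| \,.
\end{equation}

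Next I would invoke hypothesis (ii): the null-convergence condition makes $f(\lambda)^2 R_{\mu\nu} U^\mu U^\nu \ge 0$ pointwise, so the left-hand side above is bounded from above by $(n-2)\int_0^\ell f'(\lambda)^2\, d\lambda$. It therefore suffices to produce a smooth $(-\frac{1}{2})$-density $f$ with $f(0)=1$, $f(\ell)=0$, and $\int_0^\ell f'(\lambda)^2\, d\lambda \le |H|$. The natural — and, as it turns out, optimal — choice is the linear interpolant $f(\lambda) = 1 - \lambda/\ell$, for which $f' \equiv -1/\ell$ and $\int_0^\ell f'(\lambda)^2\, d\lambda = 1/\ell$. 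Under the hypothesis $\ell \ge 1/|H|$ we get $1/\ell \le |H|$, so \eqref{eq:fp-criteria} holds and Proposition~\ref{prop:fp-criteria} yields a focal point of $P$ along $\gamma$; if moreover $\ell > 1/|H|$ the inequality is strict and the focal point lies strictly before $q = \gamma(\ell)$.

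There is essentially no serious obstacle here once the index-form machinery of Section~\ref{sec:index} is in place; the only points requiring a little care are the bookkeeping that turns $U_\mu H^\mu$ into $-|H|$ through the chosen normalization of the affine parameter, and checking that the linear function is an admissible test density in the sense of Proposition~\ref{prop:fp-criteria}. It is worth remarking that the linear interpolant minimizes the Dirichlet energy $\int_0^\ell f'(\lambda)^2\, d\lambda$ among all functions with the prescribed boundary values, so the threshold $\ell \ge 1/|H|$ is the sharpest one obtainable from the focal-point criterion using only the null-convergence condition; weakening that pointwise hypothesis to the averaged and QEI-inspired conditions, while controlling the loss in this estimate, is precisely the task of the later sections.
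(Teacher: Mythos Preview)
Your proposal is correct and follows essentially the same approach as the paper: both pick the linear test density $f(\lambda)=1-\lambda/\ell$, use the null-convergence condition to discard the curvature term, and reduce \eqref{eq:fp-criteria} to $1/\ell \le |H|$. Your write-up is simply more explicit about the bookkeeping (the identification $U_\mu H^\mu = -|H|$ via the normalization $\hat H_\mu U^\mu=1$) and adds the observation that the linear interpolant is optimal, which the paper leaves implicit.
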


\begin{proof}
In the chosen coordinates, pick $f(\lambda)=1-\lambda/\ell$. Then using the fact that the null convergence condition holds on $\gamma$ the inequality of Eq.~\eqref{eq:fp-criteria} is true if $\ell \geq 1/|H|$. Then $\gamma$ has a focal point before $\lambda= \ell$.
\end{proof}

\section{The classical black hole area theorem}
\label{sec:classical}

In this section we first present a proof of the original Hawking black hole area theorem using the index form method presented in Sec.~\ref{sec:index}. Then we prove the area theorem using the weakest possible condition for this method that does not allow for any horizon decrease.

\subsection{The Hawking black hole area theorem}
\label{subsec:classical-bh-area}

Here we follow the notation and part of the proof structure of Wald~\cite{wald2010general}. However, we note that both the original paper~\cite{Hawking:1971vc} and  Ref.~\cite{wald2010general} use the Raychaudhuri equation which is avoided here. 

\begin{theorem}[Black Hole Area Theorem]
	\label{th:classical-bh-area}
	Suppose that
	\begin{enumerate}
	    \item[(i)]
	 \((M, g_{\mu\nu})\) is a strongly asymptotically predictable spacetime;
	 \item[(ii)]
	 the null convergence condition
	 \be
	 R_{\mu\nu}U^{\mu}U^{\nu} \ge 0
	 \ee
	 holds for all null vectors \(U^{\mu}\).
	 \end{enumerate}
	 Let \(\Sigma_1\) and \(\Sigma_2\) be spacelike Cauchy surfaces for the globally hyperbolic region \(\tilde{V}\) such that \(\Sigma_2 \subset I^+(\Sigma_1)\), and given the event horizon \(H\) we define
	\be
	\mathscr{H}_1 = H \cap \Sigma_1\,, \quad \text{and}  \quad \mathscr{H}_2 = H \cap \Sigma_2 \,.
	\ee
	Then the area of \(\mathscr{H}_2\) is greater or equal than the area of \(\mathscr{H}_1\).
\end{theorem}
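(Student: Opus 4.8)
The plan is to follow the architecture of Wald's proof~\cite{wald2010general} of this theorem but to replace every appeal to the Raychaudhuri equation by the index-form input of Section~\ref{sec:index}, in particular by the Classical Focusing Theorem (Theorem~\ref{the:focusingcla}). First I would recall the structural facts that strong asymptotic predictability buys us: the event horizon $H$ is an achronal, $C^0$ null hypersurface contained in $\tilde V$, ruled by null geodesic generators that have no future endpoint in $\tilde V$ and along which, moving to the future, generators can only be gained and never lost. Since $\Sigma_2$ is a Cauchy surface for $\tilde V$ and every generator meeting $\mathscr{H}_1$ is a future-inextendible causal curve lying in $H$, it must cross $\Sigma_2$ inside $H$; flowing along generators thus gives a map $\psi:\mathscr{H}_1\to\mathscr{H}_2$ which one checks is injective where it is defined (distinct generators can meet only on the measure-zero crease set, past which the older one does not persist). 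The theorem then reduces to the claim that the cross-sectional area element is non-decreasing along each generator, i.e., that the expansion $\theta$ of the generators of $H$ is non-negative almost everywhere on $H$.

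The heart of the argument is this last claim, and it is here that the index form does the work formerly done by the Raychaudhuri equation. Suppose, for contradiction, that $\theta<0$ at a point $p$ at which $H$ is differentiable (the non-differentiable points form a measure-zero set that does not affect the area comparison). Choose a smooth spacelike codimension-$2$ cross-section $P\subset H$ through $p$; its generator $\gamma$ through $p$ is a null normal of $P$, and the condition $\theta(p)<0$ renders $P$ future-converging at $p$, $H<0$, in the sense of Theorem~\ref{the:focusingcla}. Hypothesis (ii) supplies the null convergence condition along $\gamma$, and since $\gamma$ has no future endpoint in $\tilde V$ it can be extended to affine parameter $\ell\ge 1/|H|$; Theorem~\ref{the:focusingcla} --- or directly Proposition~\ref{prop:fp-criteria} with $f(\lambda)=1-\lambda/\ell$ in~\eqref{eq:fp-criteria} --- then forces a focal point of $P$ along $\gamma$ at some $\lambda_\star\le 1/|H|$. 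By the discussion of promptness in Section~\ref{sec:index}, $\gamma$ fails to be prompt beyond $\lambda_\star$: there is a timelike curve from $P$ to $\gamma(\lambda)$ for every $\lambda>\lambda_\star$, so $\gamma(\lambda)\in I^+(P)\subset I^+(H)$. But $\gamma(\lambda)\in H$ because the generator has no future endpoint, and $P\subset H$, so this violates the achronality of $H$. Hence $\theta\ge0$ almost everywhere on $H$.

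With $\theta\ge0$ established, the conclusion is immediate: along each generator the Jacobian of $\psi$ is the exponential of $\int\theta\,d\lambda\ge0$ in the gauge adapted to the generators, so $\mathrm{Area}(\psi(\mathscr{H}_1))\ge\mathrm{Area}(\mathscr{H}_1)$, and since $\psi(\mathscr{H}_1)\subseteq\mathscr{H}_2$ --- generators that enter $H$ between $\Sigma_1$ and $\Sigma_2$ only add area --- we get $\mathrm{Area}(\mathscr{H}_2)\ge\mathrm{Area}(\mathscr{H}_1)$.

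I expect all the real difficulty to sit inside the proof that $\theta\ge0$. Three points need genuine care: (a) exhibiting, at almost every point of the merely $C^0$ horizon, a bona fide smooth spacelike cross-section whose mean curvature reproduces the generator expansion, so that the hypotheses of Theorem~\ref{the:focusingcla} are literally met --- possibly after an arbitrarily small outward deformation, as in Wald's treatment; (b) justifying that the generator really does extend in affine parameter past $1/|H|$, which in the classical argument rests on the generators of the event horizon being future-complete (or, à la Wald, on deforming the cross-section so that the relevant orthogonal null geodesics reach $\mathscr{I}^+$ and hence have infinite affine length) rather than on mere future-inextendibility; and (c) upgrading the focal point produced by Theorem~\ref{the:focusingcla} into an honest timelike connection from $P$ --- an actual failure of promptness --- which is the step that collides with achronality. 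Each ingredient is classical, but assembling them is where the content lies; everything else is bookkeeping with the map $\psi$ and the sign of $\theta$.
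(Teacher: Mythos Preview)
Your plan is correct and aligns with the paper's approach: both replace the Raychaudhuri step by the Classical Focusing Theorem (Theorem~\ref{the:focusingcla}) to produce the contradictory focal point, and both finish by flowing $\mathscr{H}_1$ along the generators into $\mathscr{H}_2$. The one structural difference is that the paper does not attempt your direct route via a cross-section $P\subset H$ and the achronality of $H$; it commits immediately to the outward deformation you mention under (a) and (b). Concretely, the paper deforms $\mathscr{H}_1$ outward on $\Sigma_1$ to a smooth $\mathscr{H}_1'$ with $\mathrm{H}'^{\mu}U'_\mu<0$, lets $K$ be the slab between them, and works with a null generator of $\partial J^+(K)$ that meets $\mathscr{H}_1'$ orthogonally and reaches a point $q\in\mathscr{I}^+$. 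This single move disposes of all three of your flagged difficulties at once: the deformed cross-section is smooth, the generator has enough affine length because it reaches $\mathscr{I}^+$, and the contradiction is simply that a generator of $\partial J^+(K)$ is prompt and hence cannot contain a focal point before $q$.

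Your direct achronality argument is tidier when it goes through, but the gap you flag at (b) is genuine: future-inextendibility of horizon generators does not by itself supply affine length $\ge 1/|H|$, and future-completeness is not among the hypotheses. The outward deformation is precisely the device that manufactures the required length, and since you already identify it as the fix, your plan is sound---just promote it from an aside to the main line, as the paper does.
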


\begin{proof}
	Let \(\Sigma_1\) be any Cauchy hypersurface for \(\tilde{V}\) through \(p\) and  \(\mathrm{H}^{\mu}\) be the mean normal curvature vector field of \(\mathscr{H}_1\). We will prove by contradiction that for \(U^{\mu}\) the tangent field  of the null generators of the horizon \(H\), it holds everywhere that:
	\begin{equation}
	\label{eq:exp-null-generators}
		\mathrm{H}^{\mu}U_{\mu} \ge 0.
	\end{equation}
	Suppose instead that \(\mathrm{H}^{\mu}U_{\mu} < 0\) at \(p\in \mathscr{H}_1\). We then want to extend the function \(\mathrm{H}^{\mu}U_{\mu}\) in a continuous way on \(\Sigma_1\) in a neighborhood of $p$. We take any small deformation of \(\mathscr{H}_1\) outward on \(\Sigma_1\), say \(\mathscr{H}_1'\) and call \(K\) the closed region on \(\Sigma_1\) between \(\mathscr{H}_1\) and \(\mathscr{H}_1'\); the boundary of its future \(\partial J^+(K)\) is a null hypersurface of co-dimension \(1\), and hence comes with its own null generators, with tangent field \(U'^{\mu}\). This allows us to define the extended function in \(p' \in\mathscr{H}_1'\) as simply the contraction 
	\(\mathrm{H}'^{\mu}U'_{\mu}\) (with \(\mathrm{H}'^{\mu}\) the mean normal curvature of \(\mathscr{H}_1'\)). There are multiple possible extensions, but we only need the existsance of a smooth one. The deformed area is shown in Fig.~\ref{fig:extension}.
	
	\begin{figure}
		\centering
		\includegraphics[scale=0.5]{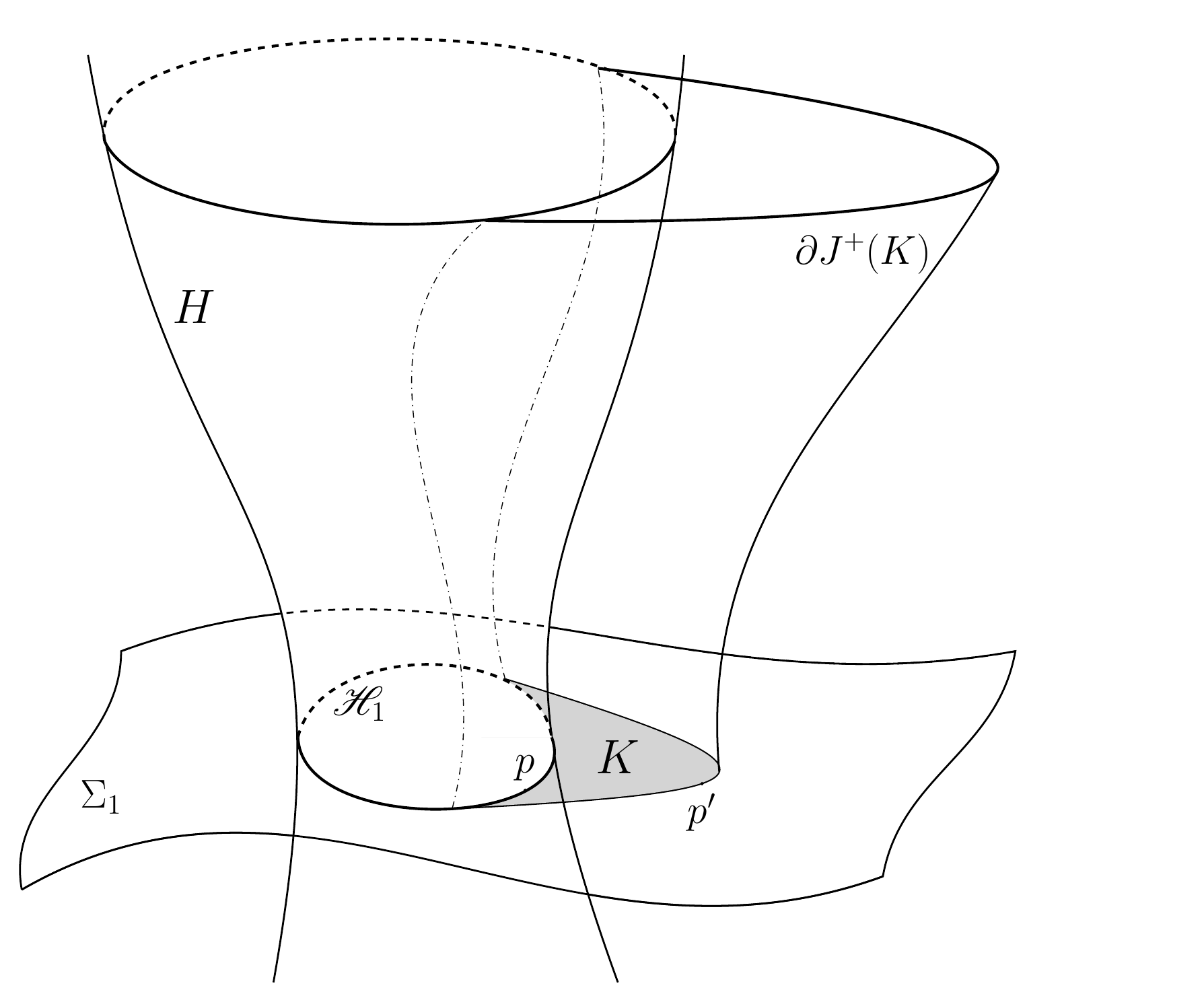}
		\caption{The smooth deformation of $\mathscr{H}_1$ on $\Sigma_1$. The shaded area $K$ is the closed region between \(\mathscr{H}_1\) and \(\mathscr{H}_1'\).}
			\label{fig:extension}
	\end{figure}
	Given that extension, there exists a neighborhood of \(p\) in which \(\mathrm{H'}^{\mu}U'_{\mu} < 0\). Then we can always choose a deformation \(\mathscr{H}_1\) outward on \(\Sigma_1\), to \(\mathscr{H}_1'\), such that
	\be
	\begin{cases}
	J^-(\mathscr{I}^+) \cap \mathscr{H}_1' \neq \emptyset; \\
	\mathrm{H'}^{\mu}U'_{\mu} < 0 \text{ everywhere on } 	J^-(\mathscr{I}^+) \cap \mathscr{H}_1' \,.
	\end{cases}
	\ee
	
	Next we choose a point \(q \in \mathscr{I}^+ \cap \partial J^+(K)\).
	The null geodesic generator through \(q\) will meet \(\mathscr{H}_1'\) orthogonally. Now we can apply Theorem~\ref{the:focusingcla} for $\gamma$ the null geodesic generator through $q$ with $\gamma(\lambda = 0) = p'$ and setting $\hat{\mathrm{H'}}^{\mu}U'_{\mu} = 1$. Since for $\mathscr{H'}_1$ we have $U'_{\mu}\mathrm{H}'^{\mu}=H'<0$, assumption (i) holds. Then a focal point will develop before $\ell=1/|H'|$ and so before $q$. This is impossible, because orthogonal null generators are prompt curves, and so cannot contain any focal points, hence 
	\be
	\forall p \in H\,, \quad \mathrm{H}^{\mu}U_{\mu}(p) \ge 0\,.
	\ee
		
	To conclude, it is enough to observe that each \(p\in \mathscr{H}_1\) lies on a null generator \(\gamma\) contained in \(H\). As \(\Sigma_2\) is a Cauchy hypersurface as well, \(\gamma\) must intersect \(\Sigma_2\) in a point \(q \in \mathscr{H}_2\). Then, the flow along null generators maps \(\mathscr{H}_1\) into a portion of \(\mathscr{H}_2\). But we know that under deformation along the flow of a vector field, the area of a submanifold evolves as (\cite{kriele1999spacetime} and \cite{senovilla2022critical})
	\begin{equation}
		\delta_UA_{\mathscr{H}_1} = \int_{\mathscr{H}_1} \mathrm{H}^{\mu}(p)V_{\mu} \ge 0 \,.
	\end{equation}
	Thus when we modify \(\mathscr{H}_1\) along the flow of the null generators the area of \(\mathscr{H}_1\) can never decrease.
\end{proof}

\subsection{A more general condition}

At the heart of the area theorem (as for singularity theorems) is the focusing theorem. The original Hawking area theorem - as we showed in the previous subsection - used the null convergence condition to deduce the formation of focal points of the null generators of the horizon. But using Prop.~\ref{prop:fp-criteria} we can easily prove the classical area theorem with a more general, averaged energy condition. In fact, this is the weakest condition for this method that does not allow the horizon area to decrease and so forbids black hole evaporation \footnote{Of course there may be weaker energy conditions by changing for example the causality or spacetime assumptions.}.

\begin{theorem}
\label{the:genclas}
	Suppose that
	\begin{enumerate}
	    \item[(i)]
	 \((M, g_{\mu\nu})\) is a strongly asymptotically predictable spacetime; 
	 \item[(ii)]
	 for \(H\) the event horizon, \(U^{\mu}\) the tangent field of its null generators $\gamma(\lambda)$, and \(\mathrm{H}^{\mu}\) the mean normal curvature of \(\mathscr{H} = \Sigma \cap H\), where \(\Sigma\) is a Cauchy surface for the globally hyperbolic region \(\tilde{V}\), the following inequality holds
	 \be
	 \label{eqn:classcond}
	 \int_\gamma f(\lambda)^2 R_{\mu \nu}U^\mu U^\nu d\lambda \geq (n-2) \|f'\|^2 \,,
	 \ee
	 where $f\in C^\infty_{1,0}[0,\ell]$ such that $f(0)=1$ and $f(\ell)=0$.
	 \end{enumerate}
	 Let \(\Sigma_1\) and \(\Sigma_2\) be spacelike Cauchy surfaces for the globally hyperbolic region \(\tilde{V}\) such that \(\Sigma_2 \subset I^+(\Sigma_1)\), and given the event horizon \(H\) we define
	\be
	\mathscr{H}_1 = H \cap \Sigma_1 \quad \quad \mathscr{H}_2 = H \cap \Sigma_2 \,.
	\ee
	Then the area of \(\mathscr{H}_2\) is greater or equal than the area of \(\mathscr{H}_1\).
\end{theorem}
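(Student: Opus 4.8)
The plan is to replay the proof of Theorem~\ref{th:classical-bh-area} essentially verbatim, replacing the single appeal to the Classical Focusing Theorem~\ref{the:focusingcla} by the sharper focal-point criterion of Proposition~\ref{prop:fp-criteria}, which is already phrased in terms of exactly the averaged quantity $\int_\gamma f(\lambda)^2 R_{\mu\nu}U^\mu U^\nu \,d\lambda$ controlled by hypothesis (ii). Concretely, I would again argue by contradiction and assume that $\mathrm{H}^\mu U_\mu < 0$ at some $p \in \mathscr{H}_1$, where $U^\mu$ is the tangent field to the null generators of the horizon. As in the classical proof, a small outward deformation $\mathscr{H}_1 \to \mathscr{H}_1'$ on $\Sigma_1$ extends the strict negativity of the contracted mean curvature to an open neighbourhood; one can thus arrange $\mathrm{H}'^\mu U'_\mu < 0$ everywhere on $J^-(\mathscr{I}^+) \cap \mathscr{H}_1'$ with $J^-(\mathscr{I}^+)\cap \mathscr{H}_1' \neq \emptyset$, where $U'^\mu$ is the generator field of the codimension-one null hypersurface $\partial J^+(K)$, with $K$ the region on $\Sigma_1$ between $\mathscr{H}_1$ and $\mathscr{H}_1'$.

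Next I would pick $q \in \mathscr{I}^+ \cap \partial J^+(K)$ and let $\gamma$ be the null generator of $\partial J^+(K)$ through $q$, which meets $\mathscr{H}_1'$ orthogonally at some $p'$; I affinely parametrise it so that $\hat{\mathrm{H}}'_\mu U'^\mu = 1$, so that $U'_\mu \mathrm{H}'^\mu = H' < 0$ along $\gamma$, and write $[0,\ell]$ for the segment from $p'$ to $q$ (or any sufficiently long finite sub-segment, should $q$ sit at infinite affine parameter). For any $f \in C^\infty_{1,0}[0,\ell]$ with $f(0)=1$ and $f(\ell)=0$, hypothesis~\eqref{eqn:classcond} gives $(n-2)\|f'\|^2 - \int_\gamma f^2 R_{\mu\nu}U^\mu U^\nu \,d\lambda \le 0$, while the right-hand side of the focal-point inequality~\eqref{eq:fp-criteria} equals $-(n-2)U'_\mu \mathrm{H}'^\mu\big|_{p'} = (n-2)|H'| > 0$. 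Hence~\eqref{eq:fp-criteria} holds \emph{strictly}, and Proposition~\ref{prop:fp-criteria} produces a focal point to $\mathscr{H}_1'$ along $\gamma$ strictly before $q$. But $\gamma$, being an orthogonal null generator of $\partial J^+(K)$ reaching $\mathscr{I}^+$, is a prompt curve and therefore cannot contain a focal point before $q$ --- a contradiction. Therefore $\mathrm{H}^\mu U_\mu \ge 0$ at every point of the horizon $H$.

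The final step is then identical to the classical case: every $p \in \mathscr{H}_1$ lies on a generator of $H$, which, since $\Sigma_2$ is a Cauchy surface, must meet $\Sigma_2$ at a point of $\mathscr{H}_2$, so the flow along the generators maps $\mathscr{H}_1$ into a portion of $\mathscr{H}_2$; since the area varies under this flow as $\delta_U A_{\mathscr{H}_1} = \int_{\mathscr{H}_1} \mathrm{H}^\mu V_\mu \ge 0$, we conclude $A(\mathscr{H}_2) \ge A(\mathscr{H}_1)$.

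I expect the main point requiring care to be that hypothesis~\eqref{eqn:classcond} is stated along the generators of the event horizon, whereas the contradiction argument feeds the focal-point criterion with generators of the \emph{deformed} null hypersurface $\partial J^+(K)$. In the classical Theorem~\ref{th:classical-bh-area} this is harmless because the null convergence condition is assumed for \emph{all} null vectors; here one must either read hypothesis (ii) as holding along every null geodesic issuing orthogonally from a codimension-two spacelike submanifold in a neighbourhood of $H$, or else recover the required inequality on $\partial J^+(K)$ by a limiting argument as $\mathscr{H}_1' \to \mathscr{H}_1$, exploiting continuity of $\mathrm{H}'^\mu U'_\mu$ and of the curvature average in the deformation parameter. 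A secondary, purely bookkeeping point is to check that the $(-\tfrac{1}{2})$-density weight and the normalisation $\hat{\mathrm{H}}_\mu U^\mu = 1$ used in Proposition~\ref{prop:fp-criteria} are compatible with the class $C^\infty_{1,0}$ of test functions appearing in~\eqref{eqn:classcond}, so that the same $f$ may legitimately be used in both.
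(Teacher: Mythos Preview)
Your proposal is correct and follows essentially the same approach as the paper: the paper's proof simply says to rerun the argument of Theorem~\ref{th:classical-bh-area}, replacing the appeal to the Classical Focusing Theorem~\ref{the:focusingcla} by Proposition~\ref{prop:fp-criteria}, which is exactly what you do, and with more detail than the paper supplies. Your self-identified concern---that hypothesis~(ii) is literally stated only for generators of $H$ while the contradiction argument requires it along generators of the deformed hypersurface $\partial J^+(K)$---is legitimate and is in fact glossed over in the paper's own proof; your suggested reading (that~\eqref{eqn:classcond} should hold for nearby null geodesics, or be recovered by a limiting argument) is the natural resolution.
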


\begin{proof}
The beginning of the proof is similar as in Theorem~\ref{th:classical-bh-area} but instead of Theorem~\ref{the:focusingcla} we apply Prop.~\ref{prop:fp-criteria} to deduce that there is a focal point. In particular using condition (ii) and Prop.~\ref{prop:fp-criteria} we conclude there is a focal point before $q$ if $H'^\mu U'_\mu <0$. Thus $\forall p \in H$, $H^\mu U_\mu(p) \geq 0$ and the proof is concluded.
\end{proof}

\begin{remark}
Assumption (ii) can be strengthened if instead of Eq.~\eqref{eqn:classcond} we use 
\be
\inf_{f\in C^\infty_{1,0}[0,\ell]} J_\ell[f] \leq 0 \,,
\ee
with 
\be
J_\ell[f]=\int_\gamma \left((n-2)f'(\lambda)^2 -f(\lambda)^2 R_{\mu \nu}U^\mu U^\nu \right) d\lambda \,.
\ee
\end{remark}

A special case of condition~\eqref{eqn:classcond} is the damped averaged null energy condition (dANEC) introduced by Lesourd~\cite{lesourd2018remark} who proved the area theorem using methods from Ref.~\cite{fewster2011singularity}. Both references use the Raychaudhuri equation. Picking $f=e^{-c\lambda/2}$ where $c>0$ Eq.~\eqref{eqn:classcond} for $n=4$ becomes
\be
\int_\gamma e^{-c\lambda} R_{\mu \nu}U^\mu U^\nu d\lambda-\frac{c}{2} \geq 0 \,,
\ee
which becomes the dANEC for future complete null geodesics \footnote{The half-ANEC and ANEC are usually conditions on the stress-energy tensor $T_{\mu \nu}$. The conditions on the curvature that are used in the singularity theorems and the area theorem can be obtained classically with the use of the Einstein Equation.}. It is interesting to note that for $c=0$ this condition reduces to half-ANEC 
\be
\int_\gamma R_{\mu \nu}U^\mu U^\nu d\lambda \geq 0 \,.
\ee
Even though those conditions are significantly weaker than the null convergence condition, we don't expect them to be satisfied by quantum fields. In general, the energy density (or the null energy) of quantum fields is not necessarily positive over portions of causal geodesics. Even the half-ANEC can be easily violated by having negative energy concentrated in one half of the geodesic, even though the ANEC is generally obeyed. Nevertheless, it is evident that it is not sufficient that quantum fields violate the NEC to allow black hole evaporation but instead they need to violate the average energy condition of Eq.~\eqref{eqn:classcond}.

\section{The generalized black hole area theorem}
\label{sec:generalized}

In the previous section we proved theorems where the black hole horizon area was not allowed to decrease. In this section we use the structure of the Hawking area theorem to provide a bound on the evaporation rate. We further show the structure of the bound for conditions of the form of QEIs.

\subsection{A bound on the evaporation rate}

Instead of requiring an energy condition to hold, we can use the causality assumption, which forbids the formation of focal points on null generators, to impose a bound on the black hole horizon area decrease. The following theorem directly uses Prop.~\ref{prop:fp-criteria}

\begin{theorem}
\label{th:general}
Suppose that \((M, g_{\mu\nu})\) is a strongly asymptotically predictable spacetime. Let \(H\) be the black hole horizon, \(U^{\mu}\) the tangent field of its null generators $\gamma(\lambda)$, and \(\mathrm{H}^{\mu}\) the mean normal curvature of \(\mathscr{H} = \Sigma \cap H\), where \(\Sigma\) is a Cauchy surface for the globally hyperbolic region $\tilde{V}$. Then the following inequality holds
\be
\label{eqn:horizonch}
		\delta_U\mathcal{A}_{\mathscr{H}} = \int_{\mathscr{H}} \mathrm{H}^{\mu}(p)U_{\mu} \ge - \frac{1}{n - 2}\left(\inf_{\substack{f\in C^{\infty}_{1,0}[0, \ell]}}J_{\ell}[f]\right)\cdot\mathcal{A}_{\mathscr{H}}.
\ee
where
\be
J_\ell[f]=\int_0^\ell \left((n-2)f'(\lambda)^2-f(\lambda)^2 R_{\mu \nu}U^\mu U^\nu  \right) d\lambda \,.
\ee
\end{theorem}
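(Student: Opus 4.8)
The plan is to mimic the structure of the proof of Theorem~\ref{th:classical-bh-area}, but instead of deriving a contradiction from the existence of a focal point, I would turn the focal-point criterion of Prop.~\ref{prop:fp-criteria} around to extract a pointwise lower bound on $\mathrm{H}^\mu U_\mu$ along each null generator of the horizon. Concretely, fix $p\in\mathscr{H}$ lying on a null generator $\gamma$ of $H$, affinely parametrized so that $\hat{\mathrm{H}}_\mu U^\mu=1$ at $\gamma(0)=p$. For any $\ell>0$ and any test function $f\in C^\infty_{1,0}[0,\ell]$, promptness of the horizon generators forbids a focal point of $\mathscr{H}$ before $q=\gamma(\ell)$. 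By the (strict-inequality) contrapositive of Prop.~\ref{prop:fp-criteria}, this means the reverse of Eq.~\eqref{eq:fp-criteria} must hold, i.e.
\be
\int_0^\ell\big((n-2)f'(\lambda)^2-f(\lambda)^2 R_{\mu\nu}U^\mu U^\nu\big)\,d\lambda\;\ge\;-(n-2)\,U_\mu \mathrm{H}^\mu\big|_{\gamma(0)}\,,
\ee
which rearranges to $U_\mu\mathrm{H}^\mu|_p\ge -\tfrac{1}{n-2}J_\ell[f]$. Since this holds for every admissible $f$, I would take the infimum over $f\in C^\infty_{1,0}[0,\ell]$ to obtain the sharpest pointwise bound
\be
\mathrm{H}^\mu(p)U_\mu\;\ge\;-\frac{1}{n-2}\inf_{f\in C^\infty_{1,0}[0,\ell]}J_\ell[f]\,.
\ee

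The second step is to integrate this pointwise inequality over $\mathscr{H}$. Using the first-variation-of-area formula already invoked in the proof of Theorem~\ref{th:classical-bh-area}, namely $\delta_U\mathcal{A}_{\mathscr{H}}=\int_{\mathscr{H}}\mathrm{H}^\mu(p)U_\mu$, and the fact that the right-hand side of the pointwise bound is a constant independent of $p$ (here one uses that the infimum is being taken uniformly, or — more carefully — one bounds each generator's infimum by a common quantity), integrating over $\mathscr{H}$ gives
\be
\delta_U\mathcal{A}_{\mathscr{H}}\;\ge\;-\frac{1}{n-2}\Big(\inf_{f\in C^\infty_{1,0}[0,\ell]}J_\ell[f]\Big)\int_{\mathscr{H}}dA\;=\;-\frac{1}{n-2}\Big(\inf_{f\in C^\infty_{1,0}[0,\ell]}J_\ell[f]\Big)\cdot\mathcal{A}_{\mathscr{H}}\,,
\ee
which is exactly Eq.~\eqref{eqn:horizonch}.

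I expect the main obstacle to be the same technical subtlety that the classical proof handles via the outward-deformation argument: $\mathrm{H}^\mu U_\mu$ is a priori defined only on $\mathscr{H}$ and one needs it to be well behaved (smooth, or at least so that the first-variation formula applies) so that the pointwise bound can be integrated, and one must ensure the generators through points of $\mathscr{H}$ reach $\mathscr{I}^+$ (or far enough) so that promptness genuinely applies with the chosen $\ell$. A second, more conceptual point is the interplay between "per-generator" infima and the single global infimum written in the statement: strictly speaking $R_{\mu\nu}U^\mu U^\nu$ varies from generator to generator, so $J_\ell[f]$ and its infimum are generator-dependent, and the clean factored bound in Eq.~\eqref{eqn:horizonch} should be read as using, for each generator, the value of $J_\ell$ along that generator — I would state this carefully and note that when one is after a uniform bound one replaces the integrand infimum by its supremum over $\mathscr{H}$. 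Apart from these bookkeeping issues, the argument is a direct contrapositive of Prop.~\ref{prop:fp-criteria} followed by integration, so no hard analysis is required.
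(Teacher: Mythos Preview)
Your proposal is correct and follows essentially the same route as the paper: obtain the pointwise bound $\mathrm{H}^\mu(p)U_\mu\ge -\tfrac{1}{n-2}\inf_f J_\ell[f]$ from the focal-point criterion of Prop.~\ref{prop:fp-criteria} (the paper phrases it as a contradiction rather than a contrapositive, invoking the outward-deformation argument of Theorem~\ref{th:classical-bh-area} that you correctly flag as the main technical step), and then integrate over $\mathscr{H}$ via the first-variation-of-area formula. Your remarks about the generator-dependence of $J_\ell[f]$ are well taken and go slightly beyond what the paper makes explicit.
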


\begin{proof}
 By contradiction, suppose there exists a point \(p\) on the horizon \(H\) where
	\be
		U_{\mu}\mathrm{H}^{\mu} < -\frac{1}{n - 2} \inf_{\substack{f\in C^{\infty}_{1,0}[0, \ell]}}J_{\ell}[f] \,.
	\ee
	Similarly to the proof of Theorem~\ref{th:classical-bh-area} there exists a deformation of \(\mathscr{H}\) on the Cauchy surface \(\Sigma\) in a neighborhood of \(p\) where this inequality holds everywhere. For the null generators of \(\partial J^+(K)\), equation~\eqref{eqn:horizonch} implies that there exist \(\ell > 0\) and \(f\in C^{\infty}_{1,0}[0, \ell]\) so that condition~\(\eqref{eq:fp-criteria}\) is satisfied. But this leads to a contradiction in the same way as in theorem~\ref{th:classical-bh-area} because we are in a globally hyperbolic spacetime, and null generators are not allowed to contain any focal point. The proof is concluded by noting that the change of the black hole horizon area is 
	\be
		\delta_U\mathcal{A}_{\mathscr{H}} = \int_{\mathscr{H}} \mathrm{H}^{\mu}(p)U_{\mu} \,.
	\ee
\end{proof}
\begin{remark}
The rate of change of the horizon can be interpreted directly as a rate of change of the mass only in the case of Schwarzschild black holes. However, even in more general black holes, we expect a term in the expression for the horizon proportional to the mass of the black hole. In that sense, the rate of change of the horizon can be interpreted as a rate of change of the mass of the black hole. As this is a lower bound, it can then be interpreted as a bound on the evaporation rate.
\end{remark}

This straightforward result will be used to bound the rate of change of the area of the horizon for an energy condition that could be obeyed by quantum fields, as we examine in the next subsection.

\subsection{A condition inspired by QEIs}

Let $P$ be a spacelike submanifold of $M$ of co-dimension 2 with mean normal curvature vector field $H_\mu$. Suppose that $\gamma(\lambda)$ is a future-directed null geodesic emanating normally from $P$. Fix the parametrization of the affine parameter $\lambda$ so that $\hat{H}_{\mu} d\gamma^{\mu}/d\lambda=1$. Then for every smooth compactly supported $(−1/2)$-density $g$ on $\gamma$, and every choice of smooth non-negative constants $Q_0$ and $Q_m$ we assume
\be
\label{eqn:QEI}
\int_0^{\ell} g(\lambda)^2 R_{\mu\nu}U^{\mu}U^{\nu} \ge -Q_m(\gamma) \vert\vert g^{(m)}\vert\vert^2 - Q_0(\gamma) \vert\vert g\vert\vert^2 \,.
\ee
We would like to use this condition along with the condition for the formation of a focal point Eq.~\eqref{eq:fp-criteria} and specify the lower bound of the horizon area change of Eq.~\eqref{eqn:horizonch}. We should note that the two smooth functions $g$ and $f$ need different boundary conditions: for the bound of the energy condition of Eq.~\eqref{eqn:QEI} to be finite, $g$ needs to obey the generalised Dirichlet boundary conditions $g^{(k)}(0)=g^{(k)}(\ell)=0$ for all $0\leq k \leq m-1$. However, the condition for the formation of focal points Eq.~\eqref{eq:fp-criteria} requires $f(0)=1$ and $f(\ell)=0$.

To address that problem, we follow the method of Ref.~\cite{fewster2020new}. First, we pick a point $0<\ell_0<\ell$ on the geodesic. Then, we define two $C^\infty$ functions $f$ and $\varphi$ in the following way: $f(\lambda)=1$ for $0\leq\lambda<\ell_0$ and $f(\ell)=0$ while $\varphi(\lambda)=1$ for $\ell_0 \geq\lambda\geq \ell$ and $\varphi(0)=0$. So we can test Eq.~\eqref{eq:fp-criteria} on $f$ and Eq.~\eqref{eqn:QEI} for $g=f\varphi$.

The only issue that remains is estimating the null energy for $0 \leq \lambda \leq \ell_0$. This can be done with a pointwise estimate of the form
\be
\label{eqn:pointwise}
R_{\mu \nu} U^\mu U^\nu\big|_{\gamma(\lambda)} \geq \rho_0\,, \quad\quad \forall\lambda\in [0, \ell_0] \,,
\ee
where $\rho_0$ is a real constant that can be positive or negative. Even though this is a pointwise condition, the result is still stronger than having just the NEC. In this case, the $\rho_0$ is allowed to be negative and it is only assumed to hold for finite values of the affine parameter on the geodesic. 

We can write \(f^2 = (\varphi f)^2 + (1 - \varphi^2)\) and then we have
	\begin{align}
	\label{eqn:estim}
		\int_0^{\ell} f(\lambda)^2R_{\mu \nu} U^\mu U^\nu &\ge \int_0^{\ell_0} (1 - \varphi(\lambda)^2)(R_{\mu \nu} U^\mu U^\nu) d\lambda - Q_m(\gamma) \vert\vert (\varphi f)^{(m)}\vert\vert^2 - Q_0(\gamma) \vert\vert \varphi f\vert\vert^2 \\
		&\ge  \rho_0 \ell_0 -\rho_0 \|\varphi \|^2 - Q_m(\gamma) \vert\vert (g)^{(m)}\vert\vert^2 - Q_0(\gamma) \vert\vert g \vert\vert^2 \,,
	\end{align}
where we used Eq.~\eqref{eqn:QEI} and Eq.~\eqref{eqn:pointwise}. Then 
\be
\label{eqn:Jbound}
J_\ell[f] \leq -\rho_0 \ell_0 +\rho_0 \|\varphi \|^2 +(n-2)\|f'\|^2+ Q_m(\gamma) \vert\vert (g)^{(m)}\vert\vert^2 + Q_0(\gamma) \vert\vert g \vert\vert^2 \,.
\ee
Note that this expression no longer depends on the curvature, but the constants originate in the energy condition and the choice of functions $f$ and $\varphi$. One way to find appropriate functions $f$ and $\varphi$ is to solve the two variational problems: 
\be
\inf_{\varphi \in C^\infty[0, \ell_0)} \int_0^{\ell_0} \left(Q_m(\gamma) (\varphi^{(m)})^2+(\rho_0+Q_0(\gamma)) \varphi^2 \right) d\lambda \,,
\ee
for $\varphi$, and
\be
\inf_{f \in C^\infty(\ell_0, 
\ell]} \int_{\ell_0}^\ell \left(Q_m(\gamma) (f^{(m)})^2+(n-2)(f')^2+Q_0(\gamma) f^2 \right) d\lambda \,,
\ee
for $f$.  The boundary conditions are $\varphi^{(k)}(0)=0$, $\varphi (\ell_0)=1$, $f(\ell_0)=1$ and $f^{(k)}(\ell)=0$ for all $0\leq k \leq m-1$. The two variational problems are very complex for general number of derivatives $m$. In Ref.~\cite{fewster2020new} incomplete beta functions were chosen as their Sobolev norms can be computed in a closed form for general $m$. We also note that in two applications with null-integrated energy conditions, the maximum number of derivatives is $1$. In that case we can solve the variational problems exactly and find the optimal $f$ and $\phi$. We present both methods below. 

\begin{theorem}
\label{the:genm}
	Suppose that
	\begin{enumerate}
	    \item[(i)]
	 \((M, g_{\mu\nu})\) is a strongly asymptotically predictable spacetime; 
	 \item[(ii)]
	 for \(H\) the event horizon, \(U^{\mu}\) the tangent field of its null generators $\gamma(\lambda)$, and \(\mathrm{H}^{\mu}\) the mean normal curvature of \(\mathscr{H} = \Sigma \cap H\), where \(\Sigma\) is a Cauchy surface for the globally hyperbolic region \(\tilde{V}\) Eq.~\eqref{eqn:QEI} holds, and additionally there exist $0<\ell_0 \leq\ell$ and $\rho_0 \in \mathbb{R}$ such that 
	 \be
	 R_{\mu \nu} U^\mu U^\nu \big|_{\gamma(\lambda)} \geq \rho_0 \,.
	 \ee
	 \end{enumerate}
	 Then
	 \be
		\delta_U\mathcal{A}_{\mathscr{H}} = \int_{\mathscr{H}} \mathrm{H}^{\mu}(p)U_{\mu} \ge - \frac{1}{n - 2}\nu(Q_m(\gamma),Q_0(\gamma),\ell,\ell_0, \rho_0)\cdot\mathcal{A}_{\mathscr{H}} \,,
\ee
where 
\be
\nu(Q_m(\gamma),Q_0(\gamma),\ell,\ell_0, \rho_0)=-(1-A_m)\rho_0\ell_0 +\frac{Q_mC_m}{\ell_0^{2m-1}} + Q_0A_m\ell + \frac{(n - 2)B_m}{\ell - \ell_0} + \frac{Q_mC_m}{(\ell-\ell_0)^{2m-1}} \,,
\ee
and
\be
\label{eqn:constants}
	A_m = \frac{1}{2} - \frac{(2m)!^4}{4(4m)!m!^4} \,, \quad 
	B_m= \frac{(2m-2)!^2(2m-1)!^2}{(4m-3)!(m - 1)!}\,, \quad 
	C_m = \frac{(2m-2)!(2m-1)!}{(m-1)!^2} \,.
\ee
\end{theorem}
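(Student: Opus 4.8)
The plan is to feed Theorem~\ref{th:general} with an explicit admissible test function. That theorem already reduces the statement to bounding $\inf_{f\in C^\infty_{1,0}[0,\ell]}J_\ell[f]$ from above, so it suffices to exhibit, for some length $\ell$ and some admissible $\ell_0\le\ell$, a single $f$ with $J_\ell[f]\le\nu(Q_m,Q_0,\ell,\ell_0,\rho_0)$: then $-\frac{1}{n-2}\inf_f J_\ell[f]\ge-\frac{1}{n-2}\nu$, and Eq.~\eqref{eqn:horizonch} gives the claimed lower bound on $\delta_U\mathcal{A}_{\mathscr{H}}$. As in the proofs of Theorems~\ref{th:classical-bh-area} and~\ref{th:general}, I would argue by contradiction: assuming $U_\mu\mathrm{H}^\mu<-\frac{1}{n-2}\nu$ somewhere on $H$, deform $\mathscr{H}$ outward on $\Sigma$ and pass to a null generator $\gamma$ of $\partial J^+(K)$, parametrised by $\lambda\in[0,\ell]$ with $\hat{\mathrm{H'}}^{\mu}U'_{\mu}=1$; for a small enough deformation the assumptions in~(ii)---the QEI-type bound Eq.~\eqref{eqn:QEI} and the pointwise bound $R_{\mu\nu}U^\mu U^\nu\ge\rho_0$ on $[0,\ell_0]$---persist along $\gamma$. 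The only content beyond Theorem~\ref{th:general} is then the explicit evaluation of the estimate already recorded in Eq.~\eqref{eqn:Jbound},
\be
J_\ell[f]\le -\rho_0\ell_0+\rho_0\|\varphi\|^2+(n-2)\|f'\|^2+Q_m(\gamma)\|g^{(m)}\|^2+Q_0(\gamma)\|g\|^2\,,\qquad g=\varphi f\,,
\ee
which follows from the splitting $f^2=(\varphi f)^2+(1-\varphi^2)$ (valid once $f\equiv1$ on $[0,\ell_0]$ and $\varphi\equiv1$ on $[\ell_0,\ell]$), the QEI applied to $g$, and $R_{\mu\nu}U^\mu U^\nu\ge\rho_0$ applied to the remainder.

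The core of the proof is the choice of $\varphi$ and $f$ and the closed-form evaluation of the norms in that estimate. Following Ref.~\cite{fewster2020new}, I would take as building block the normalised incomplete beta function $\beta_m(x)=\frac{(2m-1)!}{(m-1)!^2}\int_0^x t^{m-1}(1-t)^{m-1}\,dt$, which satisfies $\beta_m(0)=0$, $\beta_m(1)=1$ and $\beta_m^{(k)}(0)=\beta_m^{(k)}(1)=0$ for $1\le k\le m-1$. Set $\varphi(\lambda)=\beta_m(\lambda/\ell_0)$ on $[0,\ell_0]$ with $\varphi\equiv1$ afterwards, and $f\equiv1$ on $[0,\ell_0]$ with $f(\lambda)=\beta_m\big((\ell-\lambda)/(\ell-\ell_0)\big)$ on $[\ell_0,\ell]$; then $f\in C^\infty_{1,0}[0,\ell]$ as Prop.~\ref{prop:fp-criteria} demands, and $g=\varphi f$ is supported in $[0,\ell]$ with the generalised Dirichlet data required by Eq.~\eqref{eqn:QEI}. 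Rescaling each integral to $[0,1]$ collapses every norm onto one of three universal constants: $\|\varphi\|^2=A_m\ell_0$ and $\|g\|^2=A_m\ell$ with $A_m=\int_0^1\beta_m^2$; $\|f'\|^2=B_m/(\ell-\ell_0)$ with $B_m=\int_0^1(\beta_m')^2$; and $\|g^{(m)}\|^2=C_m\big(\ell_0^{1-2m}+(\ell-\ell_0)^{1-2m}\big)$ with $C_m=\int_0^1(\beta_m^{(m)})^2$. Substituting into the displayed estimate and grouping the $\rho_0$ and $Q_0$ terms yields exactly $J_\ell[f]\le\nu$, which closes the argument; one is moreover free to optimise over $\ell_0\in(0,\ell]$ and over $\ell$.

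The step I expect to be the real obstacle is the closed-form evaluation of the constants in Eq.~\eqref{eqn:constants}, especially $C_m$. Here $B_m$ comes straight from the Beta integral $\int_0^1 x^{2m-2}(1-x)^{2m-2}\,dx$ since $\beta_m'(x)=\frac{(2m-1)!}{(m-1)!^2}x^{m-1}(1-x)^{m-1}$; $A_m=\int_0^1\beta_m^2$ is elementary but more laborious (a double integral, expressible as a finite sum); and $C_m$ requires recognising, via the Rodrigues formula, that $\beta_m^{(m)}$ is a multiple of the shifted Legendre polynomial $\tilde P_{m-1}$, after which $\int_0^1\tilde P_{m-1}^2=1/(2m-1)$ yields $C_m=\frac{(2m-2)!(2m-1)!}{(m-1)!^2}$. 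A secondary technical point, inherited from Ref.~\cite{fewster2020new}, is that $g=\varphi f$ is only $C^{m-1}$ at $\lambda=\ell_0$ (its $m$-th derivative jumps there); one handles this either by reading Eq.~\eqref{eqn:QEI} for densities with merely $g^{(m)}\in L^2$, or by mollifying $g$ near $\ell_0$ and passing to the limit, the relevant norms being continuous under such a smoothing. Everything else is substitution bookkeeping.
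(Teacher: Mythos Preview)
Your proposal is correct and follows essentially the same route as the paper: choose $f$ and $\varphi$ as (regularised) incomplete beta functions, plug into the estimate Eq.~\eqref{eqn:Jbound}, read off the norms as $A_m,B_m,C_m$ times the appropriate powers of $\ell_0$ and $\ell-\ell_0$, and then invoke Theorem~\ref{th:general}. The paper simply cites Appendix~A of Ref.~\cite{fewster2020new} for the values of $A_m,B_m,C_m$ rather than deriving them, and it does not comment on the $C^{m-1}$ junction at $\lambda=\ell_0$; your remarks on the Legendre--Rodrigues identification for $C_m$ and on mollifying $g$ near $\ell_0$ are welcome additions that fill in details the paper leaves implicit.
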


\begin{proof}
The first part of the proof is similar to Lemma 4.5 of Ref.~\cite{fewster2020new}. We pick functions 
\be
	f(\lambda) = 
		\begin{cases}
			1 \hfill \lambda\in [0, \ell_0) \\
			I(m, m; \frac{\ell - \lambda}{\ell - \ell_0}) \quad \hfill \lambda\in [\ell_0, \ell]\,,
		\end{cases}
\ee
and
\be
	\varphi(\lambda) = 
	\begin{cases}
		I(m, m;\frac{\lambda}{\ell_0}) \quad \hfill \lambda \in [0, \ell_0) \\
		1 \hfill \lambda \in [\ell_0, \ell] \,,
	\end{cases}	
\ee
noting that they satisfy the requirements explained above. Then $J_\ell[f]$ obeys Eq.~\eqref{eqn:Jbound}. From Appendix A of Ref.~\cite{fewster2020new} we have that
\be
	\vert\vert \varphi f\vert\vert^2 = A_m\ell_0 + A_m(\ell - \ell_0)\quad\quad
		\vert\vert (\varphi f)^{(m)}\vert\vert^2 = \frac{C_m}{\ell_0^{2m - 1}} + \frac{C_m}{(\ell - \ell_0)^{2m - 1}}
		\quad\quad 
		\vert\vert f'\vert\vert^2 = \frac{B_m}{\ell - \ell_0} \,,
\ee
where the constants are given by Eq.~\eqref{eqn:constants}. Then $J_\ell[f] \leq \nu(Q_m(\gamma),Q_0(\gamma),\ell,\ell_0, \rho_0)$. The rest of the proof follows Theorem~\ref{th:general} concluding via contradiction that $H^\mu (p)U_\mu \geq -(1/(n-2))\nu(Q_m(\gamma),Q_0(\gamma),\ell,\ell_0, \rho_0)$.
\end{proof}

Now we turn to the $m=1$ case. We can prove the following theorem

\begin{theorem}
\label{the:m=1}
Suppose that
	\begin{enumerate}
	    \item[(i)]
	 \((M, g_{\mu\nu})\) is a strongly asymptotically predictable spacetime; 
	 \item[(ii)]
	 for \(H\) the event horizon, \(U^{\mu}\) the tangent field of its null generators $\gamma(\lambda)$, and \(\mathrm{H}^{\mu}\) the mean normal curvature of \(\mathscr{H} = \Sigma \cap H\), where \(\Sigma\) is a Cauchy surface for the globally hyperbolic region \(\tilde{V}\), 
  \be
  \label{eq:QEI}
\int_0^{\ell} g(\lambda)^2 R_{\mu\nu}U^{\mu}U^{\nu} \ge -Q_1(\gamma) \vert\vert g'\vert\vert^2 - Q_0(\gamma) \vert\vert g\vert\vert^2 \,.
\ee

  holds, and additionally there exist $0<\ell_0 \leq\ell$ and $\rho_0 \in \mathbb{R}$ such that 
	 \be
    \label{eq:rho0}
	 R_{\mu \nu} U^\mu U^\nu \big|_{\gamma(\lambda)} \geq \rho_0 \qquad \text{for} \qquad \lambda \in[0,l_0] \,.
	 \ee
	 \end{enumerate}
	 Then
	 \be
		\delta_U\mathcal{A}_{\mathscr{H}} = \int_{\mathscr{H}} \mathrm{H}^{\mu}(p)U_{\mu} \ge - \frac{1}{n - 2}\nu_1(Q_1(\gamma),Q_0(\gamma),\ell,\ell_0, \rho_0)\cdot\mathcal{A}_{\mathscr{H}} \,,
\ee
where
\bea
\label{eqn:nuonebound}
\nu_1(Q_1(\gamma),Q_0(\gamma),\ell,\ell_0, \rho_0)&=&-\rho_0 \ell_0+\sqrt{Q_0(\gamma)(Q_1(\gamma)+n-2)} \coth{\left(\frac{(\ell-\ell_0)\sqrt{Q_0(\gamma)}}{\sqrt{Q_1(\gamma)+n-2}}\right)} \nonumber \\
&&+\sqrt{Q_1(\gamma) (Q_0(\gamma)+\rho_0)} \coth{\left(\frac{\ell_0\sqrt{Q_0(\gamma)+\rho_0}}{\sqrt{Q_1(\gamma)}}\right)} \,.
\eea

\end{theorem}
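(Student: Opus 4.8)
The plan is to reuse the scheme behind Theorem~\ref{the:genm}: everything reduces to producing an upper bound for $\inf_{f\in C^\infty_{1,0}[0,\ell]}J_\ell[f]$ along a null generator of the horizon and then feeding it into the contradiction argument of Theorem~\ref{th:general}. The only new ingredient is that when $m=1$ the two auxiliary variational problems are elementary, so one can replace the incomplete beta functions of Ref.~\cite{fewster2020new} by their exact minimisers. Concretely, fix $0<\ell_0\le\ell$ on a generator, take a smooth $f$ with $f\equiv 1$ on $[0,\ell_0)$ and $f(\ell)=0$, and a smooth $\varphi$ with $\varphi\equiv 1$ on $[\ell_0,\ell]$ and $\varphi(0)=0$, so that $g:=f\varphi$ satisfies the $m=1$ generalised Dirichlet conditions $g(0)=g(\ell)=0$. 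Applying the pointwise bound \eqref{eq:rho0} on $[0,\ell_0]$ together with the QEI \eqref{eq:QEI} for $g$ is exactly Eq.~\eqref{eqn:Jbound} with $m=1$; using that $f'\equiv 0$ on $[0,\ell_0)$ and that $g$ restricts to $\varphi$ on $[0,\ell_0)$ and to $f$ on $[\ell_0,\ell]$, the bound decouples across $\ell_0$ into
\[
J_\ell[f]\;\le\;-\rho_0\ell_0+\int_0^{\ell_0}\!\Big(Q_1\,(\varphi')^2+(Q_0+\rho_0)\,\varphi^2\Big)d\lambda+\int_{\ell_0}^{\ell}\!\Big((Q_1+n-2)\,(f')^2+Q_0\,f^2\Big)d\lambda ,
\]
where the arguments $(\gamma)$ of $Q_0,Q_1$ are suppressed.

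Next I would minimise each of the two integrals separately over the stated boundary data. Each integrand is a positive-definite quadratic form in the function and its derivative, so the minimiser is the unique solution of the linear Euler--Lagrange equation: $Q_1\varphi''=(Q_0+\rho_0)\varphi$ with $\varphi(0)=0$, $\varphi(\ell_0)=1$, and $(Q_1+n-2)f''=Q_0 f$ with $f(\ell_0)=1$, $f(\ell)=0$. These are $\sinh$-profiles; writing $\mu:=\sqrt{(Q_0+\rho_0)/Q_1}$, one has $\varphi(\lambda)=\sinh(\mu\lambda)/\sinh(\mu\ell_0)$, which is monotone with $0\le\varphi\le 1$ on $[0,\ell_0]$, so $1-\varphi^2\ge 0$ there as required for the estimate \eqref{eqn:Jbound}. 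Integrating each quadratic functional by parts and invoking its Euler--Lagrange equation collapses it to a single boundary term: the $\varphi$-integral becomes $Q_1\varphi(\ell_0)\varphi'(\ell_0)=\sqrt{Q_1(Q_0+\rho_0)}\coth\!\big(\ell_0\sqrt{Q_0+\rho_0}/\sqrt{Q_1}\big)$ and the $f$-integral becomes $\sqrt{Q_0(Q_1+n-2)}\coth\!\big((\ell-\ell_0)\sqrt{Q_0}/\sqrt{Q_1+n-2}\big)$. Adding $-\rho_0\ell_0$ reproduces $\nu_1(Q_1(\gamma),Q_0(\gamma),\ell,\ell_0,\rho_0)$ of Eq.~\eqref{eqn:nuonebound}, hence $\inf_f J_\ell[f]\le\nu_1$.

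Finally I would close exactly as in Theorem~\ref{th:general}. If some point $p$ of the horizon had $U_\mu \mathrm{H}^\mu<-\frac{1}{n-2}\nu_1$, then, as in Theorem~\ref{th:general}, an outward deformation of $\mathscr{H}$ on the Cauchy surface $\Sigma$ preserves this strict inequality on $J^-(\mathscr{I}^+)\cap\mathscr{H}'$; since $-(n-2)U_\mu \mathrm{H}^\mu>\nu_1\ge\inf_f J_\ell[f]$, there is a smooth $f\in C^\infty_{1,0}[0,\ell]$ with $J_\ell[f]<-(n-2)U_\mu \mathrm{H}^\mu$, i.e.\ satisfying \eqref{eq:fp-criteria} strictly. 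By Prop.~\ref{prop:fp-criteria} this forces a focal point strictly before the point $q\in\mathscr{I}^+\cap\partial J^+(K)$ on an orthogonal, hence prompt, null generator, contradicting strong asymptotic predictability. Therefore $U_\mu \mathrm{H}^\mu\ge-\frac{1}{n-2}\nu_1$ everywhere on $H$, and integrating over $\mathscr{H}$ with $\delta_U\mathcal{A}_{\mathscr{H}}=\int_{\mathscr{H}}\mathrm{H}^\mu U_\mu$ gives the claimed bound.

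The argument is routine; what needs care is technical. First, the exact minimisers have a derivative kink at $\ell_0$ and are therefore not $C^\infty$, so one must approximate them in the relevant Sobolev norms --- harmless because only the infimum of $J_\ell$ enters, exactly as in Ref.~\cite{fewster2020new}. Second, one must keep track of which $L^2$-norm lives on which subinterval when decoupling \eqref{eqn:Jbound}: $\|f'\|^2$ picks up nothing from $[0,\ell_0)$, while $\|g\|^2$ and $\|g'\|^2$ split cleanly at $\ell_0$ into a $\varphi$-part on $[0,\ell_0)$ and an $f$-part on $[\ell_0,\ell]$. Third, the closed-form $\coth$ expressions tacitly require $Q_1>0$, $Q_0>0$ and $Q_0+\rho_0>0$; the degenerate limits $Q_0\to0$ or $Q_0+\rho_0\to0$ recover the $1/(\ell-\ell_0)$- or $1/\ell_0$-type terms, whereas $Q_0+\rho_0<0$ turns the relevant $\sinh$ into a $\sin$ and would need separate treatment, so these hypotheses should be understood as implicit in the statement.
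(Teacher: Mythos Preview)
Your proposal is correct and follows essentially the same route as the paper: split at $\ell_0$, apply \eqref{eqn:Jbound} with $m=1$, solve the two decoupled quadratic variational problems exactly to obtain the $\sinh$ profiles and the two $\coth$ boundary terms, and then invoke the contradiction argument of Theorem~\ref{th:general}. Your treatment is in fact slightly more explicit than the paper's on the technical side (the $C^\infty$ approximation of the kinked minimiser, the check that $0\le\varphi\le1$ needed for \eqref{eqn:estim}, and the sign hypotheses on $Q_0$ and $Q_0+\rho_0$), while the paper additionally records the degenerate case $Q_0+\rho_0=0$ and the continuity of $\nu_1$ there; otherwise the arguments coincide.
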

\begin{proof}
From Eq.~\eqref{eqn:Jbound} and assumption (i) we have
\bea
J_\ell[f]&\leq& -\rho_0 \ell_0 +\inf_{\varphi \in C^\infty[0, \ell_0)} \int_0^{\ell_0} \left(Q_1(\gamma) (\varphi')^2+(\rho_0+Q_0(\gamma)) \varphi^2 \right) d\lambda \nonumber\\
&&+\inf_{f \in C^\infty(\ell_0, 
\ell]} \int_{\ell_0}^\ell \left((Q_1(\gamma)+n-2)(f')^2+Q_0(\gamma) f^2 \right) d\lambda \,.
\eea
The solutions to the two variational problems with boundary conditions $f(\ell_0)=1$, $f(\ell)=0$ and $\varphi(0)=0$, $\varphi(\ell_0)=1$ give the functions
\be
\label{eqn:f}
	f(\lambda) = 
		\begin{cases}
			1 \hfill \lambda\in [0, \ell_0) \\
			 \csch{\left[\alpha (\ell-\ell_0) \right]} \sinh{\left[\alpha (\ell-\lambda) \right]} \quad \hfill \lambda\in [\ell_0, \ell]\,
		\end{cases}
  \quad  \alpha=\sqrt{\frac{Q_0(\gamma)}{Q_1(\gamma)+n-2}}\,,
\ee
and
\be
\label{eqn:varphi}
	\varphi(\lambda) = 
	\begin{cases}
		\csch{\left(\beta \ell_0 \right)} \sinh{\left(\beta \lambda \right)} \quad \hfill \lambda \in [0, \ell_0) \\
		1 \hfill \lambda \in [\ell_0, \ell] \,,
	\end{cases}	
 \quad \beta=\sqrt{\frac{Q_0(\gamma)+\rho_0}{Q_1(\gamma)}}
\ee
where in the special case \(Q_0(\gamma) + \rho_0 = 0\)
\be
\label{eqn:varphi-0}
	\varphi(\lambda) = 
	\begin{cases}
		\frac{1}{\ell_0} \quad \hfill \lambda \in [0, \ell_0) \\
		1 \hfill \lambda \in [\ell_0, \ell] \,,
	\end{cases}	
\ee

\begin{figure}
\includegraphics[scale=0.7]{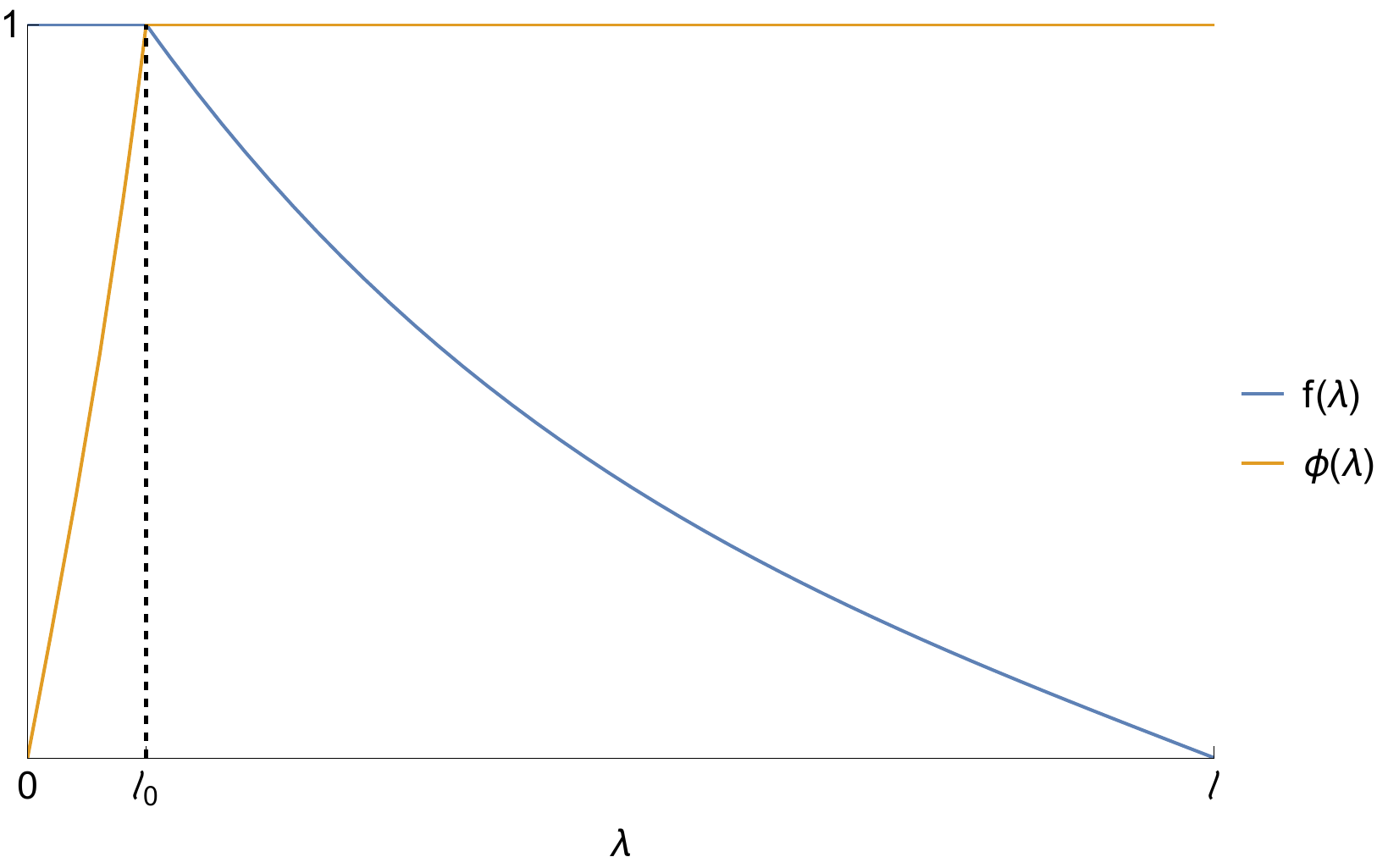}
\caption{The functions $f(\lambda)$ and $\varphi(\lambda)$ given in Eq.~\eqref{eqn:f} and \eqref{eqn:varphi} for specific values of the parameters.}
\label{fig:functions}
\end{figure}

A plot of the functions is given in Fig.~\ref{fig:functions}. Whenever $Q_0(\gamma)+\rho_0 < 0$ the hyperbolic functions in the definition of \(\varphi\) become trigonometric functions. 

With these functions we have $J_\ell[f]\leq \nu_1(Q_1(\gamma),Q_0(\gamma),\ell,\ell_0,\rho_0)$. The rest of the proof follows Theorems~\ref{the:genm} and \ref{th:general}.
It is worth pointing out how in the limit \(Q_0 + \rho_0 \rightarrow 0\) the upper bound reduces to:
\be
\nu_1(Q_1(\gamma),Q_0(\gamma),\ell,\ell_0, \rho_0)=-\rho_0 \ell_0+\sqrt{Q_0(\gamma)(Q_1(\gamma)+n-2)} \coth{\left(\frac{(\ell-\ell_0)\sqrt{Q_0(\gamma)}}{\sqrt{Q_1(\gamma)+n-2}}\right)} + \frac{Q_1(\gamma)}{\ell_0} \,,  
 \ee
 so no divergence appears, and additionally \(\nu_1\) is continuous, as expected.

\end{proof}
		
\section{Applications}
\label{sec:applications}

In this section we will use two models of field theories, one classical and one quantum, that violate the NEC as well as the more general condition of Theorem~\ref{the:genclas}, thus allowing for black hole evaporation. For those two models we will compare the bound on the evaporation rate given by Theorem~\ref{the:m=1} with the explicit evaporation rate calculated for spherically symmetric spacetimes. Throughout the section we slightly simplify the notation, implying for the coefficients that $Q_i=Q_i(\gamma)$.

The evaporation rate of a black hole, the rate at which its mass is decreasing due to Hawking radiation, was computed early on for non-rotating black holes by Page \cite{page1976particle}
\be
\frac{1}{M} \frac{dM}{dt}=-\frac{\hbar c^4 \alpha}{G^2}\frac{1}{M^3} \,,
\ee
where $\alpha$ is a numerical coefficient that depends on the particle species emitted. In terms of the Hawking temperature
\be
\label{eqn:evaprate}
\nu_{\text{ev}}=-\frac{1}{M} \frac{dM}{dt}=(8\pi)^3 \alpha \left(\frac{k}{T_{\text{pl}}^2\hbar}\right) T^3 \,.
\ee
For a spherically symmetric black hole, the evaporation rate coincides with the rate of decrease of the black hole horizon up to a multiplicative constant
\be
\frac{\delta_U\mathcal{A}_{\mathscr{H}}}{\mathcal{A}_{\mathscr{H}}} = (n - 1) \frac{\delta_U{M}}{M} = -(n - 1)\nu_{\text{ev}}.
\ee

\subsection{Optimization}

Before discussing any of the examples, we will examine the dependence of $\nu_1$ on the affine length parameters $\ell$ and $\ell_0$. It is rather immediate to see that \(\nu_1\) in~\eqref{eqn:nuonebound} realizes its minimum for $\ell\to \infty$ as the only $\ell$−dependence is in the hyperbolic cotangent term, which is monotonically decreasing towards $1$. Unlike the singularity theorems case, where a finite $\ell$ was giving information about the location of the singularity, here we can take $\ell\to \infty$ and simplify the derived bound. Beyond this intuition, it makes physical sense to require an energy condition of the form \eqref{eq:QEI} to be obeyed for \(\ell \rightarrow +\infty\); the geodesics considered all extend to an infinite value of the affine parameter, due to the causality condition, and there is no reason to expect we should restrict to looking at only a sub-portion of it. In other words, taking \(\ell \rightarrow +\infty\) should be seen as equivalent to taking into account as much information as we have access to, and hence retrieve the best possible bound.

Turning to $\ell_0$, the matter is a bit more subtle. We will divide the discussion in three cases:\\

\noindent
(i) $\rho_0 > 0$ \\

\noindent
In this case we can always find a finite $\ell_0$ that makes $\nu_1=0$. In particular the value of $\ell_0$ is the solution of
\be
-|\rho_0| \ell_0+\sqrt{Q_0(Q_1+n-2)}+\sqrt{Q_1 |Q_0+\rho_0|} \coth{\left(\frac{\ell_0\sqrt{|Q_0+\rho_0|}}{\sqrt{Q_1}}\right)}=0 \,.
\ee
This is the case where the original Hawking area theorem applies and evaporation is prohibited. However, if the problem provides a value of $\ell_0$ smaller than this critical value then evaporation is allowed and an evaporation bound can be computed.\\

\noindent
(ii) $\rho_0 \leq 0$ and $\rho_0 + Q_0 \ge 0$ \\

\noindent
The minimum of $\nu_1$ is found when $\ell_0=\tilde{\ell}_0$ where
\be
\label{eq:l0hyperbolic}
\tilde{\ell}_0=\sqrt{\frac{Q_1}{|Q_0+\rho_0|}}\arcsinh{\left(\sqrt{\frac{|Q_0+\rho_0|}{|\rho_0|}}\right)} \,,
\ee
which in this case is a global minimum. The case of $\rho_0=0$ gives an asymptotic minimum value of $\nu_1$ equal to $\sqrt{Q_0(Q_1+n-2)}+\sqrt{Q_1 Q_0}$ for $\ell_0 \to \infty$. \\

\noindent
(iii) \(\rho_0< 0\) and \(\rho_0 + Q_0 < 0\) \\

\noindent

In this case, the hyperbolic cotangent of Eq.~\eqref{eqn:nuonebound} turns into a trigonometric one, and there always exist a finite value of \(\ell_0\) which makes \(\nu_1\) divergent, namely
\be
\ell_0 = \pi\sqrt{\frac{Q_1}{\vert Q_0 + \rho_0\vert}} \,.
\ee
However, this value should be excluded. The reason is, it lies in a regime where the pointwise energy condition $R_{\mu \nu}U^\mu U^\nu \geq -|\rho_0|$ contradicts the averaged energy condition of Eq.~\eqref{eq:QEI} as we analyze in Appendix~\ref{app:l0upperbound}. In particular we find that the upper bound for $\ell_0$ in this case is 
\be
\label{eqn:ellbound}
\ell_0 \leq \frac{\pi}{2} \sqrt{\frac{Q_1}{\vert Q_0 + \rho_0\vert}},
\ee
which not only excludes all the divergent values of \(\nu_1\), but also all its negative values, consistently proving that this set of energy conditions doesn't forbid black hole evaporation. The only minimum of $\nu_1$ in the allowed region is
\begin{equation}
    \label{eq:l0-trigonometric}
    \tilde{\ell_0}=\sqrt{\frac{Q_1}{|\rho_0 + Q_0|}}\arcsin{\left(\sqrt{\frac{|\rho_0 + Q_0|}{|\rho_0|}}\right)}\,, 
\end{equation}
which is the analytic continuation of the expression for \(\tilde{\ell}_0\) given in \eqref{eq:l0hyperbolic}. Thus \(\tilde{\ell_0}\) is continuous for all values of $\rho_0+Q_0$. The $\tilde{\ell_0}$ is always less than the upper bound of Eq.~\eqref{eqn:ellbound} as the function $\arcsin{x}$ has a maximum value of $\pi/2$. The three cases are visualized in Fig.~\ref{fig:cases}.

\begin{figure}
     \centering
     \begin{subfigure}[b]{0.3\textwidth}
         \centering
   \includegraphics[height=3.4cm]{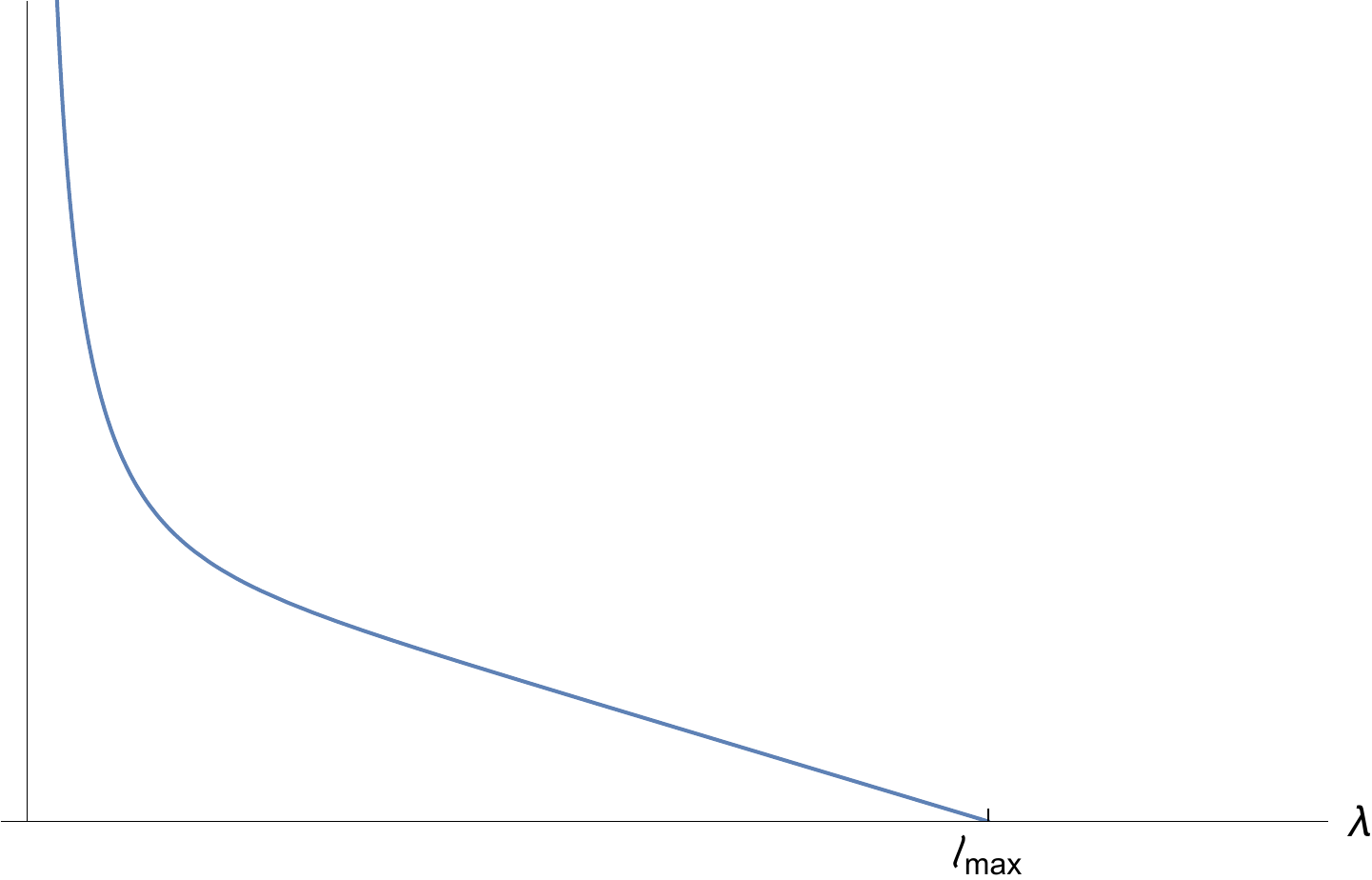}
         \caption[caption]{Case (i)  \newline $\!\!\!\rho_0 > 0$}
         \label{fig:(i)}
     \end{subfigure}
     \hfill
     \begin{subfigure}[b]{0.3\textwidth}
         \centering
   \includegraphics[height=3.4cm]{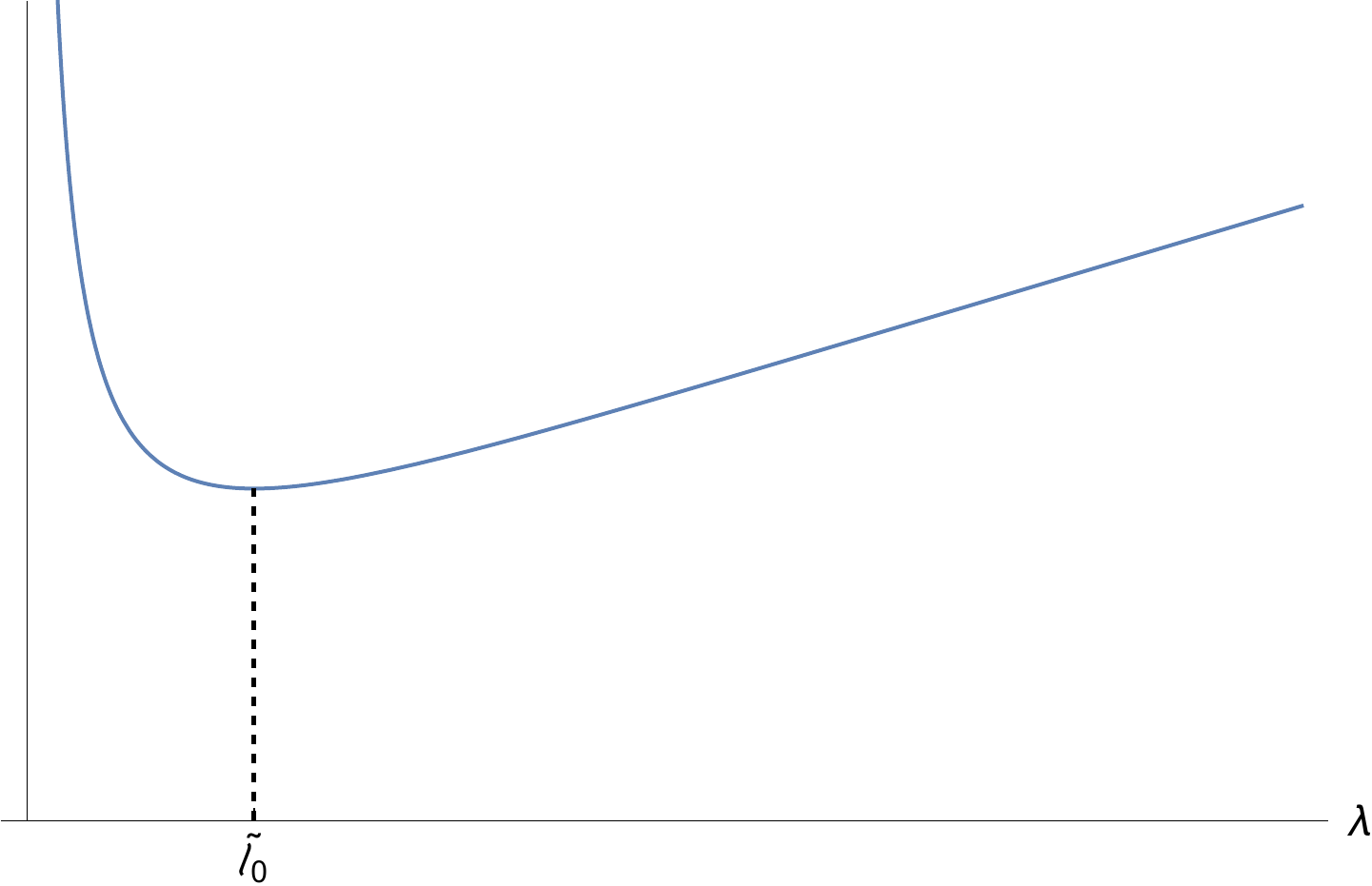}
         \caption[caption]{Case (ii)\newline
         $\rho_0 \leq 0$ and $Q_0+\rho_0 \geq 0$}
         \label{fig:(ii)}
     \end{subfigure}
     \hfill
     \begin{subfigure}[b]{0.3\textwidth}
         \centering
\includegraphics[height=3.4cm]{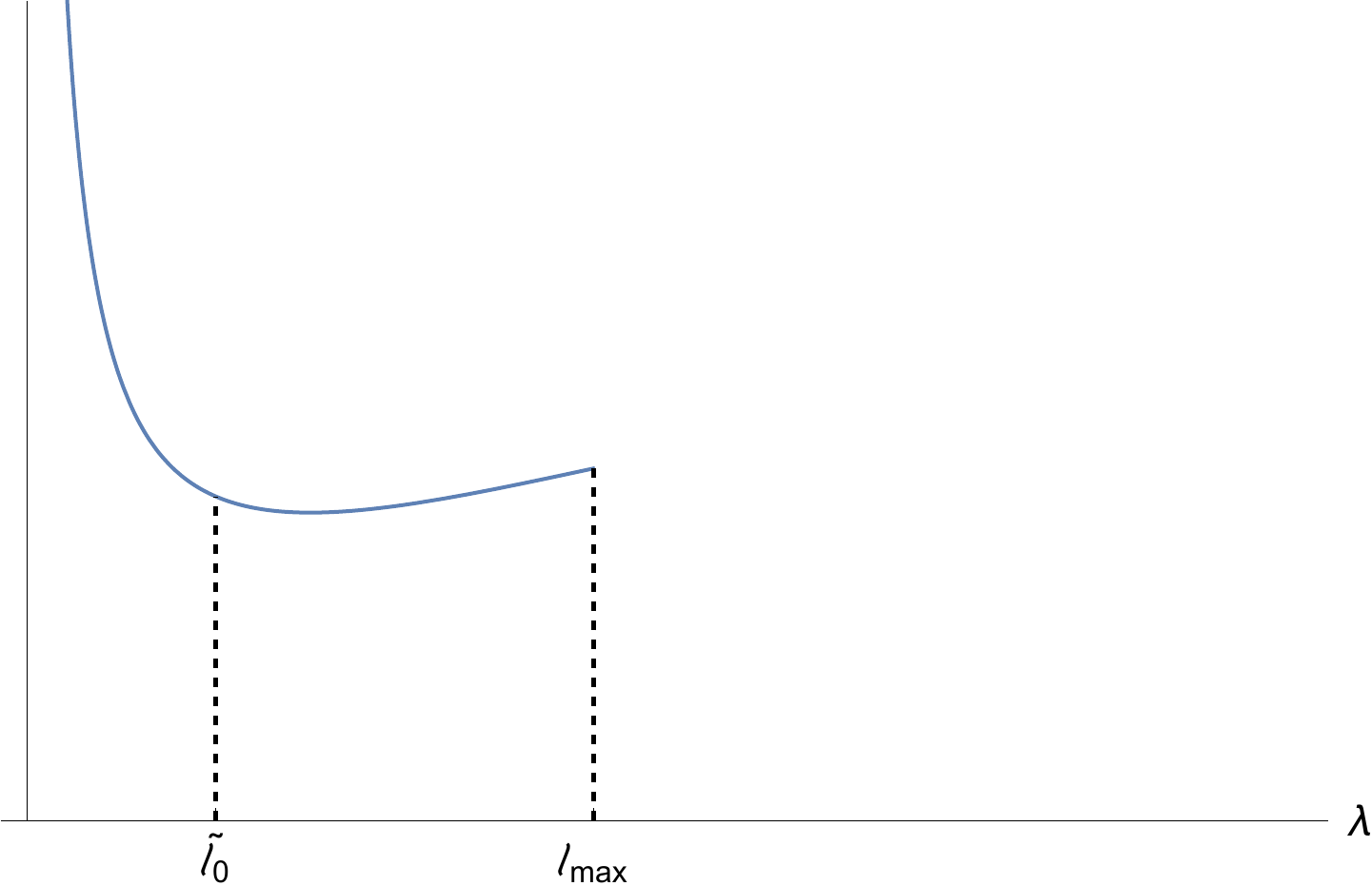}
         \caption[caption]{Case (iii)\newline
         $\rho_0 < 0$ and $Q_0+\rho_0 < 0$}
         \label{fig:(iii)}
     \end{subfigure}
        \caption{Sample plots of $\nu_1$ for the three cases of constant values. In graph (a) the $\ell_{\max}$ represents the value $\ell_0$ for which the evaporation rate becomes zero and the original theorem applies. In graph (ii) there is a global minimum, $\tilde{\ell_0}$ given in Eq.~\eqref{eq:l0hyperbolic}. In graph (iii) there is a maximum value of $\ell_0$ after which the QEI is not obeyed (see Appendix~\ref{app:l0upperbound}). In the allowed range there is a local minimum $\tilde{\ell}_0$ given in Eq.~\eqref{eq:l0-trigonometric}.}
        \label{fig:cases}
\end{figure}

From now on we will assume that $\rho_0<0$ which is the case of interest for black hole evaporation, although, as we discussed, our method is also applicable for non-negative $\rho_0$. The optimized evaporation bound for four-dimensions ($n=4$) reduces to 
\bea
\label{eqn:nuopt}
\nu_{\text{opt}}(Q_0,Q_1,\rho_0)&=&\sqrt{Q_0(2+Q_1)}+\sqrt{Q_1 Q_0} \nonumber \\
&&+|\rho_0| \sqrt{\frac{Q_1}{Q_0 +\rho_0}}\arcsinh{\left(\sqrt{\frac{Q_0 + \rho_0}{|\rho_0|}}\right)} \,,
\eea
which includes both cases (ii) and (iii). The constants $Q_0$ and $Q_1$ will include the details of the specific matter model we use via the relevant energy condition. The pointwise bound $\rho_0$ is generally undetermined as quantum fields have no pointwise lower bounds for all states. However, for applications, we can get a reasonable value for it considering estimations of the null energy density around the black hole horizon. For that purpose we use the numerical results of Levi and Ori \cite{levi2016versatile} for the renormalized stress-energy tensor of a minimally coupled scalar field near the horizon of a Schwarzschild black hole. They find the value of the null energy on the $r=3M$ geodesic to be
\be
\label{eqn:rhozero}
\rho_0 \approx -2.7 \times 10^{-7} \frac{\hbar c^9}{G^4 M^4} \,,
\ee
in SI units. The $M^{-4}$ dependence preceded by a negative coefficient -- the most relevant features of \eqref{eqn:rhozero} -- can also be found using the semi-analytical model of Visser~\cite{PhysRevD.56.936} for conformally coupled quantum scalar fields\footnote{The stress tensor averaged over the Unruh vacuum is analytically constrained up to some coefficients that Ref.~\cite{PhysRevD.56.936} determines numerically; relying on the same framework, but changing to Kruskal coordinates, we have been able to compute the contraction \(T_{\mu\nu}U^{\mu}U^{\nu}\) on the horizon of a Schwarzschild black hole, in presence of scalar fields conformally coupled to the metric. This process was also described in \cite{christensen1977sa}.}. Thus we will proceed to bound the pointwise null energy near the black hole horizon using Eq.\eqref{eqn:rhozero}. 

Using the expression for Hawking temperature 
\be
T_H=\frac{\hbar c^3}{8\pi G M k} \,,
\ee
where $k$ is the Boltzmann constant, we can write $\rho_0$ in terms of $T_H$ 
\be
\label{eqn:rhozerofin}
\rho_0\approx- 0.1 \frac{k^4}{\hbar^3 c^3} T^4 \,.
\ee

\subsection{The non-minimally coupled classical Einstein--Klein--Gordon theory}
\label{subsec:non-min-EKG-theory}

Non-minimally coupled scalar fields is the typical classical example that violate the NEC \cite{kontou2020energy}, and thus the original black hole area theorem doesn't apply. These scalar fields are described by the Lagrangian density
\be
\mathcal{L}[\phi] = -\frac{1}{2}\left[\nabla_{\mu}\phi\nabla^{\mu}\phi   - (m^2 - \xi R)\phi^2\right] \,,
\ee
where $\xi$ is the coupling constant and $m$ the mass of the field. The corresponding stress-energy tensor acquired by varying the action is
\begin{equation}
    T_{\mu\nu} = \nabla_{\mu}\phi\nabla_{\nu}\phi - \frac{1}{2}g_{\mu\nu}\left[\nabla_{\rho}\phi\nabla^{\rho}\phi - m^2\phi^2\right] + \xi\left(g_{\mu\nu}\square_g - \nabla_{\mu}\nabla_{\nu} + G_{\mu\nu}\right)\phi^2 \,.
\end{equation}
To proceed, we assume that the non-minimally coupled scalar field obeys the bound $\phi^2 < 1/8\pi\xi$. This is a reasonable assumption as if the field value is larger, the effective Newton constant changes sign \cite{kontou2020energy}. Furthermore, we assume that the coupling constant is $\xi \in [0,1/4]$, where $1/4$ is always smaller that the conformal coupling $\xi_c= (n - 2)/(4(n - 1))$. Then we can state the following bound proven in Ref.~\cite{brown2018singularity} (and applying semiclassical Einstein Equations):
\be
        \int_{\gamma}g^2 R_{\mu\nu}U^{\mu}U^{\nu} \ge -Q\left(\|g'\|^2 + \tilde{Q}^2 \vert\vert g\vert\vert^2\right) \,,
\ee
    where 
\be
    Q = \frac{32\pi\xi\phi_{\max}^2}{1 - 8\pi\xi\phi_{\max}
    ^2}\,,
    \quad\quad
    \tilde{Q} = \frac{8\pi\xi\phi_{\max}\phi'_{\max}}{1 - 8\pi\xi\phi_{\max}^2} \,.
\ee
and
\be
\phi_{\max}=\sup_{\gamma}|\phi|\,, \quad \phi'_{\max}=\sup_{\gamma}\vert \phi'(\lambda) \vert \,.
\ee
Then this is a bound of the form of Eq.~\eqref{eqn:QEI} with $m=1$, $Q_0=Q\tilde{Q}^2$ and $Q_1=Q$.  

To proceed, we need to estimate the values of $\phi_{\max}$ and $\phi'_{\max}$. In principle the field values may be not bounded, however it is reasonable to connect the scale of the field magnitude with a temperature. In order to accomplice that we follow the hybrid approach of Refs.~\cite{brown2018singularity,fewster2020new}. We take for $\phi_{\max}$ as the value of the Wick square $\langle \nord{\phi^2} \rangle_\omega$ in Minkowski spacetime, where \(\omega\) is some state of the quantum field theory. Then we specify \(\omega\) to be a thermal equilibrium KMS state \cite{haag2012local}, and connect $\phi_{\max}$ to a temperature $T$. The result for massless fields was derived in Ref.~\cite{brown2018singularity} 
\be
\phi_{\max}^2 \sim \langle \nord{\phi^2} \rangle_T=\lim_{x'\rightarrow x} \left[W_T(x, x') - W_0(x, x')\right]=\frac{T^{n - 2}}{2^{n - 2}\pi^{\frac{n - 1}{2}}}\frac{\Gamma(n - 2)}{\Gamma\left(\frac{n - 1}{2}\right)}\zeta(n - 2) \,.
\ee
Here $W_T$ is the two point function of the state with temperature $T$ and $\zeta$ is the Riemann zeta function. Similarly for the $\phi'_{\max}$ and $U^\mu$ any null vector with $U^0=1$ we have
\be
(\phi'_{\max})^2 \sim \langle \nord{(U^\mu \nabla_\mu \phi) (U^\nu \nabla_\nu \phi) } \rangle_T=\frac{nT^n}{2^{n - 1}\pi^{\frac{n - 1}{2}}} \frac{\Gamma(n)}{\Gamma\left(\frac{n + 1}{2}\right)} \zeta(n) \,.
\ee
Substituting $n=4$ we have
\be
\phi_{\max}^2\sim \frac{T^2}{12} \,, \qquad (\phi'_{\max})^2\sim \frac{2\pi^2 T^4}{45} \,,
\ee
and the corresponding coefficients 
\be
Q_1=\frac{8\pi \xi (T/T_{\text{pl}})^2}{3(1-(2/3)\pi \xi(T/T_{\text{pl}})^2)} \,, \qquad Q_0=\frac{256 \pi^5 \xi^3 (T^8/T_{\text{pl}}^6)}{405(1-(2/3)\pi \xi(T/T_{\text{pl}})^2)} \left(\frac{k}{\hbar}\right)^2 \,,
\ee
where we have restored the units. Here  $T_{\text{pl}}$ is the Planck temperature and we note that $Q_1$ is dimensionless while $Q_0$ has dimensions of $s^{-2}$.

For the $\rho_0$ we will use the value of Eq.~\eqref{eqn:rhozerofin}. Even though that calculation was conducted within a semi-classical framework, here we only aim for an estimate for a pointwise value of the null energy at the black hole horizon. 

With these considerations we can explicitly express the bound of Eq.~\eqref{eqn:nuopt}, in terms of the background temperature. Consistently with the semi-classical approximation, we assume the temperature of the black hole to be far below the Planck scale, and so we expand $\nu$ in terms of $(T/T_{\text{pl}})$, obtaining at leading order:
\be
\nu_{\text{opt}}(T,\xi)=\sqrt{\frac{\pi^3}{15}} \sqrt{\xi} \left(\frac{k}{\hbar T_{\text{pl}}^2} \right)T^3+\mathcal{O}(T/T_{\text{pl}})^4 \,.
\ee
Interestingly this temperature dependence is the same as that of the evaporation rate for spherically symmetric black holes computed in Eq.~\eqref{eqn:evaprate}. We note that this result is not given trivially by dimensional analysis, as $T/T_{\text{pl}}$ is dimensionless. 
As usual, this can also be recasted into the typical semiclassical expansion, by expressing \(T_{pl}\) in terms of $G$. 

The numerical coefficients can allow us to explicit how strong are the constraints on the evaporation rate. The first observation is that the larger the coupling constant $\xi$ the larger the allowed evaporation rate. This is expected as classical fields with \(\xi = 0\) respect Eq.~\eqref{eqn:classcond}, and so the classical picture applies. The maximum value of $\xi$ for which $\nu_{\text{opt}} \leq \nu_{\text{ev}}$ is 
\be
\xi_{\max}=1.2 \times 10^8 \alpha^2 \,.
\ee
For large black holes $\alpha \sim 2\times 10^{-4}$ as shown in Ref.~\cite{page1976particle} so $\xi_{\max} \sim 4.8$. Thus values of $\xi \in [0,\xi_c]$ provide an additional bound to the evaporation rate. Of course we should note that this largely depends on the assumptions made for the pointwise energy condition and $\rho_0$. Additionally, the energy condition is for a classical field, only providing an analogy of how this condition could look for a quantum non-minimally coupled scalar. So this result merely showcases how our theorem can be used in principle to constrain the black hole evaporation rate for certain models. 

\subsection{The smeared null energy condition}

In this example we will use the smeared null energy condition (SNEC) of Eq.~\eqref{eqn:SNEC} as our energy assumption. As mentioned in the introduction, to use that condition for classical relativity theorems we need to convert it to a curvature condition using the semiclassical Einstein Equation of Eq.~\eqref{eqn:see}
\be
        \int_{\gamma}g^2(\lambda) R_{\mu\nu}U^{\mu}U^{\nu} \ge -32\pi B \|g'\|^2  \,.
\ee
This is an equation of the form of Eq.~\eqref{eqn:QEI} with $Q_1=32 \pi B$ and $Q_0=0$.

This equation has been proven for minimally coupled quantum scalar fields only for Minkowski spacetime \cite{Fliss:2021gdz} and the coefficient $Q_1$ depends on the undetermined constant $B$. Its value depends on the cutoff scale $\luv$ as when
\be
\label{eqn:GNUV}
N G \lesssim \luv^{d-2} \,.
\ee 
is saturated, meaning $\luv$ is the Planck length scale, $B=1/32\pi$. That was found in the induced gravity proof of \cite{leichenauer2019upper}. When \eqref{eqn:GNUV} is not saturated, such as in controlled effective theory constructions, we have $B\ll 1$. In these cases $\luv$ is far above the Planck length scale. This issue is also discussed in detail in Ref.~\cite{Freivogel:2020hiz}. 

Using these values for $Q_1$ and $Q_0$ as well as the estimated value of $\rho_0$ from Eq.~\eqref{eqn:rhozerofin} we have for $\nu_{\text{opt}}$ of Eq.~\eqref{eqn:nuopt}
\be
\nu_{\text{opt}}(B,T)=\frac{2\sqrt{2\pi^3 }}{\sqrt{5}} \sqrt{B} \left(\frac{k}{\hbar T_{\text{pl}}}\right) T^2 \,.
\ee

We note two things about that expression: First the smaller the constant $B$ the stricter the bound on the evaporation rate. That makes sense as the further away the UV cutoff is from the Planck scale, the more ``classical'' the theory is. Second, the temperature dependence is $T^2$, different from the estimated evaporation rate of Eq.~\eqref{eqn:evaprate} where the dependence is $T^3$. That means the evaporation rate from our theorem can provide a stricter bound only at higher temperatures. The largest possible $B$ where we have a stricter bound is found by setting the temperature equal to the Planck temperature and equating the two evaporation rates. The result is
\be
B_{\max}=1.63 \times 10^5 \pi^3 \alpha^2 \,.
\ee
For large black holes $\alpha \sim 2\times 10^{-4}$ and $B_{\max}=0.20$. This value is larger than the maximum allowed value for $B$ of $1/32\pi$. This means for a variety of effective field theories the derived evaporation bound can provide restrictions. The result for $B=1/32\pi$ is shown in Fig.~\ref{fig:SNEC}. 

\begin{figure}
\includegraphics[height=7cm]{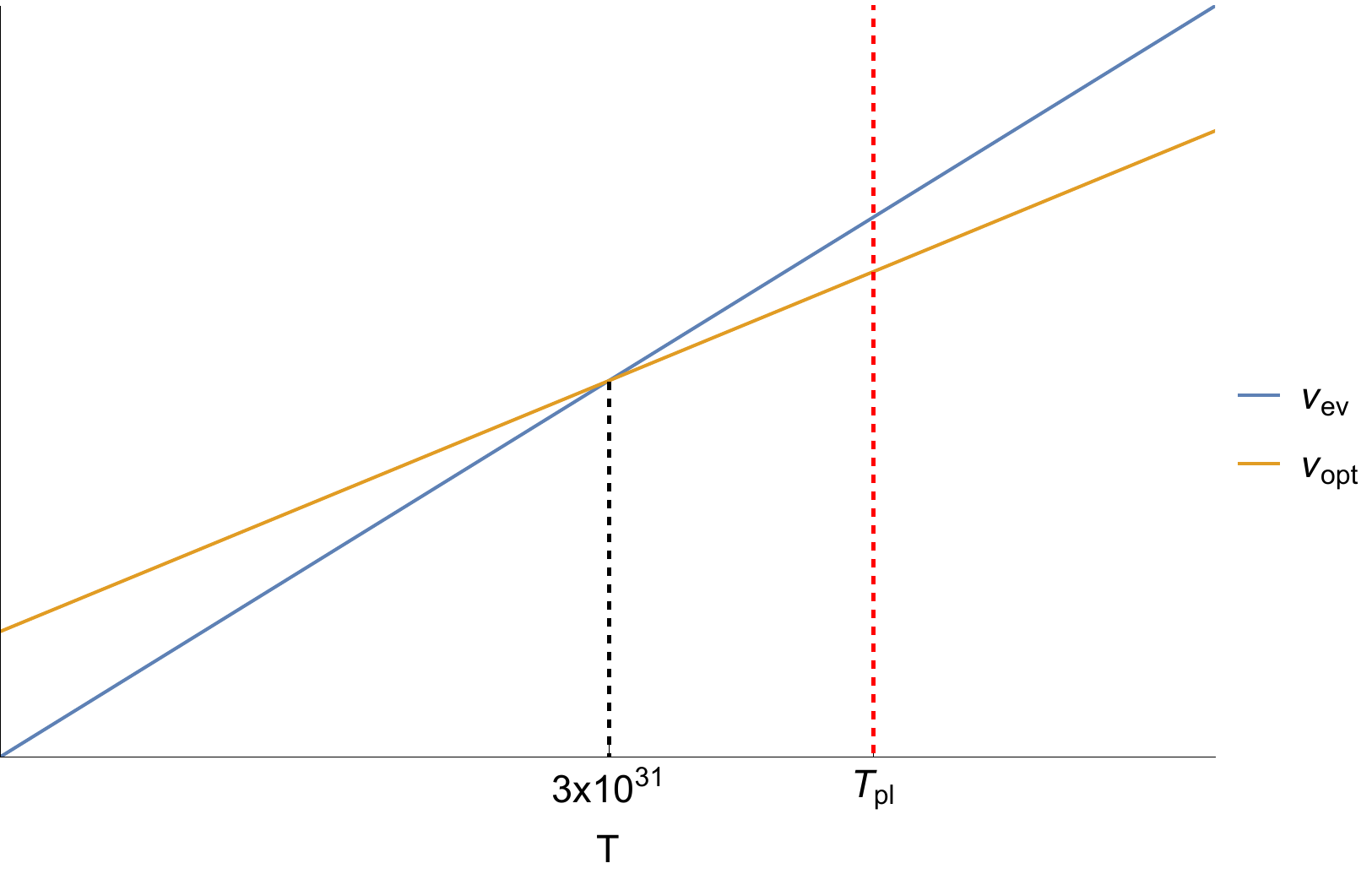}
\caption{The evaporation rate for spherical black holes (blue) given in Eq.~\eqref{eqn:evaprate} and the bound on the evaporation rate from our theorem using SNEC as the energy condition. Here we set the $B$ as its maximum value of $1/32\pi$.}
\label{fig:SNEC}
\end{figure}

\section{Discussion}
\label{sec:discussion}

In this work we derived two generalizations of the Hawking black hole area theorem: First we provided a more general condition than the null energy condition for which the conclusion of the original theorem still holds. Second for conditions inspired by quantum energy inequalities (QEIs) we proved a more general version of the theorem that, instead of prohibiting black hole evaporation, provides a bound on the black hole evaporation rate. For the last case we applied our theorem to two field theories that allow for shrinking of the black hole horizon: the classical non-minimally coupled scalar and the quantum minimally coupled scalar for a theory with a UV cutoff. While these are both toy models due to the classical nature of the field (in the first case) and lack of curvature corrections (in the second case) they provide an example of how our theorem could provide a meaningfully bound to the semiclassical black hole evaporation.

A significant obstacle to a realistic application in quantum field theory is the lack of a general finite bound on the energy integrated on a finite null segment, as discussed in the introduction. The smeared null energy condition bound used in our applications suffers from the need of a theory dependent UV cutoff. Additionally the only existing proof is for Minkowski spacetimes \cite{Fliss:2021gdz}. A better bound is that of the double smeared null energy condition, where the null energy is averaged over the two null directions \cite{Fliss:2021phs}. Unfortunately, it is unclear how such an inequality could be applied to a proof of the generalized area theorem. However, a recent derivation in the timelike case showed how a worldvolume inequality can be used in the proof of a singularity theorem \cite{Graf:2022mko} using segment inequalities and area comparison results. This work points to a way forward in the null case.

A different extension of this work would be to compare our results with the generalized second law of black hole thermodynamics. It has been shown that the second law can be used to derive the averaged null energy condition \cite{Wall:2009wi}. A connection of the evaporation rate bound we derived with the entanglement entropy of the Hawking radiation could provide meaningful insight to the semiclassical evaporation process.

Finally we should stress out that our result does only constraint the evaporation rate and it does not provide a mechanism for evaporation. More importantly the black holes with matter that violate the condition \eqref{eqn:classcond} for the classical area theorem are merely allowed to evaporate but do not necessarily do so. For example there is no known mechanism for classical fields to produce Hawking radiation, not even for the ones that violate the condition \eqref{eqn:classcond}. A recent work on semiclassical black hole evaporation \cite{Meda:2021zdw} showed that it is induced by a QEI-type condition which remains generally unproven. A connection of this work with our result could provide new insights.

\begin{acknowledgments}
E-AK would like to thank Chris Fewster and Paolo Meda for useful discussions. E-AK was partly supported by the ERC Consolidator Grant QUANTIVIOL. This work is part of the $\Delta$-ITP consortium, a program of the NWO that is funded by the Dutch Ministry of Education, Culture and Science (OCW).

VS would like to thank Luca Arnaboldi, Gimmy Tomaselli and Enrico Trincherini for useful discussions.
Part of this work was conducted as research for the Masters program in the University of Pisa, and was widely supported by the Scuola Normale Superiore of Pisa. Furthermore, VS would like to thank the University of Amsterdam for hosting her for two months while this research project was conducted. 

\end{acknowledgments}

\newpage

\appendix\numberwithin{equation}{section}

\section{Location of the trapped surface}
\label{app:trapped}

The Penrose singularity theorem \cite{penrose1965gravitational} uses an initial condition, namely the presence of a trapped surface, or a surface whose mean normal curvature is negative
\be
\mathrm{H}^{\mu}U_{\mu} < 0 \,,
\ee
for all \(U^{\mu}\) future pointing null vectors.

It has been shown that a trapped surface lies inside the horizon of a classical black hole. In particular \cite{wald2010general}
\begin{prop}
\label{prop:trapclas}
Let $(M,g_{\mu \nu})$ be a strongly asymptotically predictable spacetime for which the null convergence condition holds. Suppose $M$ contains a trapped surface $T$. Then $T \subset B$, where $B$ is the black hole region of spacetime.
\end{prop}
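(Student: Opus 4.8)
\textit{Proof proposal.} The plan is to argue by contradiction, combining the uniform focusing of the null normals of the (compact) trapped surface with the structure of achronal boundaries and of future null infinity. Recall $B = M \setminus J^-(\mathscr{I}^+)$, so to say $T \not\subset B$ is to say $T \cap J^-(\mathscr{I}^+) \neq \emptyset$; assume this, and pick $p \in T$ and $q \in \mathscr{I}^+$ with $q \in J^+(p) \subset J^+(T)$. The object to study is $\partial J^+(T)$, which is a closed achronal topological hypersurface; by the standard structure theorem for achronal boundaries, $\partial J^+(T) \setminus T$ is ruled by null geodesic segments which have past endpoints on $T$ and emanate orthogonally to $T$, and which either run off to future infinity or acquire a future endpoint.

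Next I would run the focusing argument uniformly over all these generators. Since $T$ is compact and $\mathrm{H}^{\mu}U_{\mu} < 0$ on $T$ for every future-pointing null $U^{\mu}$, the (continuous, strictly negative) function $H$ attains a maximum $-c < 0$ on $T$, so $|H| \ge c$ everywhere on $T$. Fixing on each generator the normalization $\hat{\mathrm{H}}_{\mu}U^{\mu} = 1$ and using the null convergence condition together with the Classical Focusing Theorem (Theorem~\ref{the:focusingcla}), every generator of $\partial J^+(T)$ develops a focal point of $T$ within affine parameter $\lambda \le 1/c$. Because a null geodesic cannot remain on the achronal boundary $\partial J^+(T)$ past a focal point of $T$ (beyond it the geodesic enters $I^+(T)$), each generator must in fact leave $\partial J^+(T)$ before $\lambda = 1/c$. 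Hence $\partial J^+(T)$ is contained in the image of the compact set $\{$unit null normal directions over $T\} \times [0, 1/c]$ under the normal-exponential map, and being a closed subset of this compact image it is itself compact.

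Finally I would extract the contradiction from the asymptotic structure. Strong asymptotic predictability makes $\mathscr{I}^+$ a connected, non-compact null hypersurface, and since $T$ is compact it cannot causally influence the whole of $\mathscr{I}^+$; thus $\mathscr{I}^+ \cap J^+(T)$ is a non-empty proper subset of $\mathscr{I}^+$ (the point $q$ witnessing non-emptiness). Connectedness of $\mathscr{I}^+$ then forces a point $x \in \mathscr{I}^+ \cap \partial J^+(T)$. Such an $x$ lies on a generator of $\partial J^+(T)$ which, traced to the past, ends on $T$ and, traced to the future, runs out to $\mathscr{I}^+$, i.e.\ to infinite physical affine parameter --- contradicting the bound $\lambda \le 1/c$ established above. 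Therefore $T \cap J^-(\mathscr{I}^+) = \emptyset$, i.e.\ $T \subset B$.

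The routine ingredients --- the structure theorem for $\partial J^+(T)$, the fact that achronal boundaries carry no focal points, and the focusing estimate --- are either classical or already available in this paper. I expect the main obstacle to be the last step: making precise, within the notion of strong asymptotic predictability being used, that $\mathscr{I}^+$ is connected and not entirely contained in $J^+(T)$ for compact $T$, and that a null generator of $\partial J^+(T)$ which meets $\mathscr{I}^+$ genuinely has infinite physical affine length (linking the conformal-boundary description of $\mathscr{I}^+$ with the affine parametrization used in Theorem~\ref{the:focusingcla}). This is where one must invoke the asymptotic-flatness machinery rather than purely local geometry.
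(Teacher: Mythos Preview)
Your argument is correct and is essentially the classical Penrose-style proof (as in Wald, Proposition~12.2.2): assume $T$ meets $J^-(\mathscr{I}^+)$, use uniform focusing from the trapped condition plus the null convergence condition to force $\partial J^+(T)$ to be compact, and then contradict the existence of a generator escaping to $\mathscr{I}^+$. The paper, by contrast, does not rerun this argument but simply invokes the focal-point step already established in the proof of the area theorem (Theorem~\ref{th:classical-bh-area}): that proof shows, for any codimension-$2$ surface lying in $J^-(\mathscr{I}^+)$, that one cannot have $\mathrm{H}^{\mu}U_{\mu}<0$ anywhere, since otherwise a null generator of the boundary of its future both reaches $\mathscr{I}^+$ and develops a focal point. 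A trapped surface has $\mathrm{H}^{\mu}U_{\mu}<0$ everywhere, so it cannot lie outside the horizon.

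The two approaches rest on the same mechanism --- focal points on null generators reaching $\mathscr{I}^+$ --- but are packaged differently. The paper's version is a one-line corollary because the hard work (the contradiction via Proposition~\ref{prop:fp-criteria}) has already been done in Section~\ref{sec:classical}; it only needs a single generator to reach $\mathscr{I}^+$, not compactness of the whole achronal boundary. Your version is self-contained and does not presuppose the area theorem, at the cost of the extra global step (compactness of $\partial J^+(T)$ and the intersection with $\mathscr{I}^+$) that you correctly flag as requiring the asymptotic-flatness input. In the paper's packaging that input is the same assumption already used in Theorem~\ref{th:classical-bh-area}, namely strong asymptotic predictability, so nothing new is needed.
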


The proof is straightforward as under the validity of the null convergence condition, the classical area theorem holds and any surface outside the horizon must fulfill \(\mathrm{H}^{\mu}U_{\mu} \ge 0\). With the possibility of an evaporating horizon this is not true anymore, as surfaces outside the horizon are allowed to shrink too. However, under the weaker energy conditions that allow an horizon to evaporate, the initial condition to prove a singularity theorem must be strengthened, namely that it is not enough to start from a trapped surface, but a ``sufficiently trapped'' surface is needed \cite{Freivogel:2020hiz}. Then a proposition similar to \ref{prop:trapclas} can be proven.

\begin{prop}
Let $(M,g_{\mu \nu})$ be a strongly asymptotically predictable spacetime. Let $T$ a co-dimension-$2$ spacelike hypersurface for which its mean normal curvature $H^\mu$ is sufficient to prove null geodesic incompleteness for null geodesics $\gamma$ emanating normally from $T$. Then $T \subset B$, where $B$ is the black hole region of spacetime.
\end{prop}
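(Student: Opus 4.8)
The plan is to argue by contradiction in exactly the way the area theorem (Theorem~\ref{th:classical-bh-area}) was proved, replacing the classical focusing input by the hypothesis on $H^\mu$. Suppose $T \not\subset B$. Since in a strongly asymptotically predictable spacetime the black hole region is $B = M \setminus J^-(\mathscr{I}^+)$, this means that some point of $T$ lies in $J^-(\mathscr{I}^+)$, i.e.\ a portion of $T$ is visible from future null infinity. As in the proof of Theorem~\ref{th:classical-bh-area}, a small outward deformation of $T$ on a Cauchy surface (which perturbs $H^\mu$ only slightly, hence keeps it sufficient to force null geodesic incompleteness) can be used if necessary to arrange that the piece of $T$ lying in $J^-(\mathscr{I}^+)$ is itself part of the deformed surface; I keep writing $T$ for it, and I take $T$ compact and achronal as in the relevant singularity theorems so that $\partial J^+(T)$ is generated, near $T$, by the null geodesics leaving $T$ orthogonally.

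Next I would pick a point $q \in \mathscr{I}^+ \cap \partial J^+(T)$ and consider the null geodesic generator $\gamma$ of $\partial J^+(T)$ through $q$. Traced into the past, $\gamma$ leaves $\partial J^+(T)$ only by meeting $T$, and it does so orthogonally; traced into the future it reaches $\mathscr{I}^+$, so it is future-complete. Crucially, being a generator of the achronal boundary $\partial J^+(T)$, $\gamma$ is a prompt null geodesic in the sense reviewed in Sec.~\ref{sec:index}, and by Proposition~\ref{prop:H-positivity-criteria} a prompt orthogonal null geodesic cannot contain a focal point to $T$. This step is purely causal and uses only strong asymptotic predictability, no energy condition.

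Finally I would invoke the hypothesis: the mean normal curvature $H^\mu$ of $T$ is, by assumption, sufficient to prove null geodesic incompleteness for null geodesics emanating normally from $T$. Unpacking what this means operationally in every singularity theorem of the relevant type --- classical Penrose~\cite{penrose1965gravitational} as well as the semiclassical versions~\cite{Freivogel:2020hiz,fewster2020new} --- it is that the combination of $H^\mu$ with the assumed (QEI-inspired) energy condition forces, through Proposition~\ref{prop:fp-criteria}, a focal point of $T$ along \emph{every} normal null geodesic within a finite, $T$-determined affine parameter $\ell$. In particular $\gamma$, although future-complete, must contain a focal point to $T$ at affine parameter $\leq \ell < \infty$, contradicting the previous paragraph. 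Hence no point of $T$ can lie in $J^-(\mathscr{I}^+)$, i.e.\ $T \subset B$.

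The main obstacle I expect is making the hypothesis precise: ``sufficient to prove null geodesic incompleteness'' must be read as ``forces a focal point within a finite, $T$-determined affine parameter along each normal null geodesic,'' which is exactly the intermediate conclusion of every such singularity theorem (the incompleteness statement itself then follows from compactness of $T$ together with global hyperbolicity). Once this reading is fixed, the remaining ingredients --- achronality/compactness of $T$ so that $\partial J^+(T)$ is generated by the orthogonal null geodesics, future-completeness of the generator reaching $\mathscr{I}^+$, and the causal regularity of $\partial J^+(T)$ and $\mathscr{I}^+$ guaranteed by strong asymptotic predictability --- are standard and parallel the classical argument behind Proposition~\ref{prop:trapclas}.
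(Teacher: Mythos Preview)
Your argument is correct and follows essentially the same route as the paper: the hypothesis is unpacked as ``every normal null geodesic from $T$ develops a focal point within a finite affine parameter'' (the paper writes this as the inequality $U_\mu H^\mu|_{\gamma(0)} \le -\tfrac{1}{n-2}J_\ell[f]$ of Proposition~\ref{prop:fp-criteria}), which is then confronted with the fact that a generator of $\partial J^+(T)$ reaching $\mathscr{I}^+$ is prompt and hence focal-point-free. One small remark: the outward deformation you borrow from Theorem~\ref{th:classical-bh-area} is not needed here. In the area theorem the point $p$ sat on the horizon itself, so one had to push $\mathscr{H}_1$ slightly into $J^-(\mathscr{I}^+)$ to get a generator reaching $\mathscr{I}^+$; in your contradiction hypothesis $T$ already meets $J^-(\mathscr{I}^+)$, so you can work directly with $T$ and avoid the (otherwise delicate) claim that a perturbation of $H^\mu$ preserves the ``sufficiently trapped'' threshold.
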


\begin{proof}
Focal points form along every normal null geodesic $\gamma$ emanating normally from $T$ if and only if
\be
U_\mu H^\mu \big|_{\gamma(0)} \leq -\frac{1}{n-2} \int_\gamma \big((n -2)f'(\lambda)^2 - f(\lambda)^2 R_{\mu \nu} U^\mu U^\nu \big)d\lambda \,,
\ee
holds for any \(U^{\mu}\). This is necessary to prove the contradiction to prove null geodesic incompleteness. However, outside the horizon the null generators of the boundary of the future of \(T\) must satisfy 
\be
U_\mu H^\mu  \geq -\frac{1}{n-2} \int_\gamma \big((n -2)f'(\lambda)^2 - f(\lambda)^2 R_{\mu \nu} U^\mu U^\nu \big)d\lambda \,.
\ee
The reason is that null generators are prompt curves, containing no focal points. Thus any surface \(T\) must lie inside the black hole region.
\end{proof}

This is a generalization of proposition \(12.2.2\) of~\cite{wald2010general}. However, in the classical case the null convergence condition is required. Here no energy condition is necessary: this is indeed only a statement of geometric properties of the spacetime.

Such a result is in accordance with the proof of singularity theorems for evaporating black holes \cite{Freivogel:2020hiz}. There, it was estimated how far inside the black hole is the sufficiently trapped surface to prove null geodesic incompleteness.

\section{Compatibility of pointwise condition with QEI}
\label{app:l0upperbound}

Independently from the area theorem, it is necessary to impose that the hypotheses required for theorem \ref{the:m=1} are consistent with each other. This comes naturally when \(Q_0 + \rho_0 \geq 0\), while it requires a bit more meditation when \(Q_0 + \rho_0 < 0\). 

In this second case, we consider $\lambda \in [0,\ell_0]$ where $R_{\mu \nu} U^\mu U^\nu \geq -|\rho_0|$. This pointwise condition allows $R_{\mu \nu} U^\mu U^\nu = -|\rho_0|$ for all $\lambda \in [0,\ell_0]$. But for large $\ell_0$ this can be incompatible with the QEI 
\be
\int_0^{\ell_0} g(\lambda)^2 R_{\mu\nu}U^{\mu}U^{\nu} \ge -Q_1(\gamma) \vert\vert g'\vert\vert^2 - Q_0(\gamma) \vert\vert g\vert\vert^2 \,.
\ee
that should hold in the same regime. So for $g$ compactly supported in $[0,\ell_0]$ we need to have
\be
-|\rho_0| ||g||^2 \ge -Q_0||g||^2 - Q_1 ||g'||^2 \,.
\ee
To check this in the case that $\rho_0+Q_0<0$ we study the functional \(K[g,g'] = -|Q_0 + \rho_0|||g||^2 + Q_1||g'||^2\). Imposing the boundary condition \(g(0) = 0\)) its Euler-Lagrange equation gives us the minimal functions 
\be
\tilde{g}(\lambda) = A\sin{\left(\lambda\sqrt{\frac{|Q_0 + \rho_0|}{Q_1}}\right)} \,,
\ee
where $A$ is an irrelevant constant. So the minimal values of \(K\) are:
\be
K\left[\tilde{g},\tilde{g}'\right] = Q_1\tilde{g}\tilde{g}'\Big\vert_{0}^{\ell_0} = \frac{A^2}{2}\sqrt{Q_1|Q_0 + \rho_0|}\sin{\left(2\ell_0\sqrt{\frac{|Q_0 + \rho_0|}{Q_1}}\right)} \,.
\ee
Therefore, to always have \(K\) non-negative we need
\be
\ell_0 \le \frac{\pi}{2}\sqrt{\frac{Q_1}{|Q_0 + \rho_0|}} \,,
\ee
which imposes an upper bound on $\ell_0$.

\newpage
\bibliographystyle{utphys}
\bibliography{bibliografia.bib}

\end{document}